\numberwithin{equation}{section}
\newtheorem{lemma}{Lemma}[section]
\newtheorem{theorem}[lemma]{Theorem}
\newtheorem{corollary}[lemma]{Corollary}
\newtheorem{remark}[lemma]{Remark}
\newtheorem{definition}[lemma]{Definition}
\def\sleq{\lesssim}
\newcommand{\tb}{{\tt b}}
\def\as{p}
\def\fhi{x}
\def\exte{[\Pi^{(s)}_M(\as)]^{ext}}
\def\exten#1{[\Pi^{(s)}_M(#1)]^{ext}}
\newcommand{\R}{\mathbb R}
\newcommand{\Z}{\mathbb Z}
\newcommand{\N}{\mathbb N}
\newcommand{\T}{\mathbb T}
\def\norm#1{\|#1  \|}
\def\im{{\rm i}}
\def\timereg{C^{\infty}_b}
\def\td{2}
\def\sy#1{S^{#1}_\delta}
\def\poi#1#2{\left\{#1;#2\right\}}
\def\cZ{{\mathcal{Z}}}
\definecolor{awesome}{rgb}{1.0, 0.13, 0.32}
\definecolor{darkgr}{rgb}{0.0, 0.62, 0.42}
\definecolor{cyan}{rgb}{0.0, 0.72, 0.92}
\def\cF{{\mathcal{F}}}
\def\cT{{\mathcal{T}}}
\def\QS{{\mathcal{Q}\kern-0.3pt\mathcal{S}}}
\def\tC{{\tt C}}
\def\tD{{\tt D}}
\def\tR{{\tt R}}
\def\cO{{\mathcal{O}}}
\def\cR{{\mathcal{R}}}
\def\cB{{\mathcal{B}}}
\def\tb{{\tt b}}
\newcommand{\Span}{\mathrm{span}}
\newcommand{\scala}[2]{{#1} \cdot {#2} }
\def\mom{1}
\def\da{{\frak{a}}}
\begin{document}

\title{A global Nekhoroshev theorem for particles on the torus with
  time dependent Hamiltonian. }


\date{}


\author{ Dario Bambusi\footnote{Dipartimento di Matematica, Universit\`a degli Studi di Milano, Via Saldini 50, I-20133
Milano. 
 \textit{Email: } \texttt{dario.bambusi@unimi.it}}}

\maketitle

\begin{abstract}
We prove a $C^\infty$ global version of Nekhoroshev theorem for time
dependent Hamiltonians in $\R^d\times\T^d$. Precisely, we prove a
result showing that for all times the actions of the unperturbed
systems are bounded by a constant times $ \langle
t\rangle^{\epsilon}$.  We apply the result to the dynamics of a
charged particle in $\T^d$ subject to a time dependent electromagnetic
field.
\end{abstract}
\noindent

{\em Keywords:} Nekhoroshev theorem,
time dependent perturbations, growth of solutions, particles in an
electromagnetic field.

\medskip

\noindent
{\em MSC 2020:} 37J40, 70K70, 37J65

 
\section{Introduction}\label{intro}

In this paper we study the dynamics of a Hamiltonian system of the form
\begin{equation}
  \label{hami}
H=h_0(\as )+P(\as ,\fhi,t)\ ,\quad
h_0(\as):=\sum_{j=1}^d\frac{p_j^2}{2}\ ,\quad (\as,\fhi)\in\R^d\times
\T^d\ ,
\end{equation}
where 
$P(\as ,\fhi,t)$ is a $C^{\infty}$ function bounded by
$\langle\as\rangle^\tb$ with some $\tb<\td$ (as usual $\langle\as\rangle:=\sqrt{1+\left\|\as\right\|^2}$).  The main example we have in mind is that
of a particle subject to a time dependent electromagnetic field,
namely
\begin{equation}
  \label{ele}
H=\sum_{j=1}^d\frac{\left(p_j-A_j(x,t)\right)^2}{2}+\varphi(x,t)\ ,
\end{equation}
with $A(x,t)$ and $\varphi(x,t)$ functions of class $C^\infty(\T^d\times\R;\R)$.

We are going to prove that $\forall \epsilon>0$ $\exists C_\epsilon$,
s.t. the
solution of the system fulfils forever an estimate of the form
\begin{align}
   \label{conclusion}
\left\|\as (t)\right\|&\leq C_\epsilon\langle
t\rangle^{\epsilon}\left(1+\left\|\as _0\right\|\right)\ ,\quad\forall t\in\R
\\
 \langle
t\rangle&:=\sqrt{1+|t|^2}\ .
\end{align}
This is essentially a Nekhoroshev type theorem \cite{Nek1,Nek2} (see
also \cite{BGG85,BG86,poschel,GioPisa,chierchia,kunze} and literature therein) and we
give here a complete proof of it. Such a proof is a transposition to
the classical context of the proof of a quantum Nekhoroshev theorem
that has been given in \cite{BLM22a,BLM22b,QN}, and turns out to be
particularly simple in the classical case.

We remark that results controlling the growth of the Sobolev norms of
the wave function by $\langle t\rangle^\epsilon$ are by now quite
standard in a quantum context
\cite{Bou2,Bo97a,wang08,delort,BDGM2,BLM22a,BM19}, and the present
work arises from the curiosity of understanding is something similar
is true in classical mechanics. Actually we recall that standard
Nekhoroshev results only allow to get informations over finite amounts
of time, while we are here interested in an infinite time scale.

\vskip5pt

We recall that original Nekhoroshev's theorem deals with perturbations
of integrable systems and gives upper bounds on the drift of the
actions for very long times. Original Nekhoroshev's theorem deals with
an analytic context and the result it gives is valid over times which
are exponentially long with the inverse of the size $\epsilon$ of the
perturbation. Corresponding $C^{\infty}$ versions have been obtained
\cite{MS02,MS04,Bou10,BL21,ultra}: in these cases one gets bounds valid over a
time scale of order $\epsilon^{-N}$ with an arbitrary $N$. Our result
is in this line.

As a variant with respect to the classical framework, in this paper there
is no $\epsilon$, the reason is that we are interested in large
solutions, so the smallness is related to the decay properties of
Hamiltonians at infinity. To completely exploit this idea, we modify one
of the pillars on which Nekhoroshev theorem is build, namely we change the
definition of resonant region: the idea is that  a point $\as$ is resonant,
not if $\left|\omega(\as)\cdot k\right|<\alpha$ with some small
$\alpha$, but if $\left|\omega(\as)\cdot k\right|<\left\|k\right\|
\left\|\as\right\|^\delta$ with some positive $\delta<1$. This has the
advantage that small denominators are actually large in the
nonresonant regions. Of course, in order to exploit this idea, one has
to considerably change the classical construction of the geometric
part of the proof of Nekhoroshev's theorem, but fortunately this turns
out to be particularly simple. In this framework also the so called
analytic part of the proof is particularly simple and the tools of
symbolic calculus developed in the framework of pseudodifferential
calculus turn out to be very efficient in this context. 

A further difference with respect to the classical proof is that,
following \cite{BL21}, we perform here a normal form procedure
globally defined on the phase space and this simplify considerably the
geometric part of the proof. 

The final step leading to the control of the solution over an infinite
time scale is obtained by iterating the application of Nekhoroshev's
theorem.  
\vskip 5pt

We recall that a result very closely related to the present one is
that obtained by Giorgilli and Zehnder in \cite{GZ92}. In that paper the
authors considered a system of the form \eqref{hami} with an
analytical and globally bounded perturbation $P$. They proved that if
the initial datum is large enough, then it takes a time exponentially
long with its size to possibly double the size of the actions. This is
the analogue of our Theorem \ref{nekho}. The extension we get is that
we allow the perturbation to be unbounded and $C^\infty$, but we think
that the main point is that the present proof is much simpler than
that in \cite{GZ92}. We also recall the result by Bounemoura
\cite{BenTime}, which however deals with the more complicated situation
where the small denominators involve also the time dependence. 

Finally we recall that there exists a different proof, by Lochak
\cite{Loc,Loc2}, of Nekhoroshev's Theorem, a proof which is much
simpler than the original one, and which has also proved to be
suitable for the extension to some infinite dimensional systems
\cite{BG93} and in particular to PDEs \cite{Bam99,BG23}. Unfortunately
we have not been unable to adapt such a proof to the present time
dependent case.

{\it Acknowledgements} The present research was founded by the PRIN project
  2020XB3EFL Hamiltonian and dispersive PDEs. It was also supported by
GNFM.

\section{ Main Result}\label{state}

\begin{definition}
	\label{psM}
	A function $f\in C^{\infty}(\T^d\times \R^d\null)$ is said to
	be a symbol of class $S^m_{\delta}$, if it fulfils
	\begin{equation}
	\label{psM.1}
	\left|\partial^\alpha_\fhi\partial^\beta_\as  f(\fhi,\as )\right|\leq
	C_{\alpha,\beta}\langle \as \rangle^{m-\delta|\beta|}\ ,\quad \forall
	\alpha,\beta\in\N^n\,, \quad \forall (\fhi,\as ) \in \T^d\times\R^d\null\ .
	\end{equation}
\end{definition}

The best constants s.t. \eqref{psM.1} hold are a family of seminorms
for the space of symbols.  In this way the
space of symbols becomes a Fr\'echet space.

In order to deal with time dependent perturbations, we have to
consider also function taking value in the spaces of symbols. 

\begin{definition}\label{def.smooth}
	If $\cF$ is a Fr\'echet space, we denote by
        $C^k_b\left(\R; \cF\right)$ the space of the functions $f\in
        C^{k}\left(\R; \cF\right)$, such that all the seminorms of
        $\partial_t^jf$ are bounded uniformly over $\R$ for all $j\leq
        k$. If this is true for all $k$ we write $f\in
        C^\infty_b\left(\R; \cF\right)$ .
\end{definition}

\begin{theorem}
  \label{main}
Assume that $P\in C^{\infty}_b(\R;S^\tb_1)$
with $\tb<\td$, then $\forall \epsilon>0$ 
$\exists R_\epsilon$, s.t., if the initial datum
fulfills $\left\|\as_0\right\|\geq R_\epsilon$ then along the solutions of the Cauchy problem for the Hamilton
equation of \eqref{hami} one has
\begin{equation}
  \label{main.res}
\left\| \as (t)\right\|\leq 16\left\|\as_0\right\|
\left\langle\frac{t}{\left\|\as_0\right\|}\right\rangle^\epsilon
 \ ,\quad \forall t\in\R\ .
  \end{equation}
\end{theorem}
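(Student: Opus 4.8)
My plan is to combine a finite, globally defined normal form with an elementary iteration of the resulting finite--time estimate. \emph{Step 1 (one--step Nekhoroshev estimate).} I would first prove that for every $N\in\N$ there are $R_N,c_N>0$ and exponents $\sigma_N\to\infty$ such that $\|\as_0\|\ge R_N$ forces $\|\as(t)\|\le 2\|\as_0\|$ for all $|t|\le c_N\|\as_0\|^{\sigma_N}$. Since $\tb<\td$ I fix $\delta\in(\tb/2,1)$; then $\tb-\delta<\tb/2<1$, so generators of order $\tb-\delta$ produce canonical maps close to the identity, and since in addition $\{S^a_\delta,S^b_\delta\}\subseteq S^{a+b-\delta}_\delta$ and $\tb<2\delta$, the normal form scheme will be contractive. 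I call a mode $k$ \emph{resonant at} $\as$ when $|k\cdot\as|<\|k\|\langle\as\rangle^\delta$ and $\|k\|\le\langle\as\rangle^\varrho$ (a small ultraviolet truncation $\varrho>0$, harmless because Fourier coefficients of an element of $S^\tb_1$ decay faster than any power of $\|k\|$), and \emph{nonresonant} otherwise; the point of the dilated condition is that a nonresonant divisor is \emph{large}: $|k\cdot\as|\ge\|k\|\langle\as\rangle^\delta$. Solving mode by mode the homological equation $\{h_0,\chi\}=-P^{\rm nr}$ (the lower--order term $\partial_t\chi$ being left in the remainder) gives $\chi\in C^\infty_b(\R;S^{\tb-\delta}_\delta)$, and its time--one flow conjugates \eqref{hami} to $h_0+Z_1+P_2$ with $Z_1\in S^\tb_\delta$ supported on resonant modes and $P_2\in S^{\tb_1}_\delta$, $\tb_1:=\max(2\tb-2\delta,\tb-\delta)<\tb$. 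Iterating $N$ times then produces a canonical transformation $\Phi$, which is $O(\langle\as\rangle^{\tb-\delta})=o(\|\as\|)$--close to the identity on $\{\|\as\|\ge R\}$, such that $H\circ\Phi=h_0+Z+R_N$ with $Z\in S^\tb_\delta$ supported on resonant modes and $R_N\in S^{-M}_\delta$, $M=M(N)\to\infty$.

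\emph{Step 2 (confinement --- the main obstacle).} In the variables of $H\circ\Phi$ one has $\dot\as=-\partial_\fhi Z-\partial_\fhi R_N$, and here the geometry enters. By construction $\partial_\fhi Z(\as,\cdot)$ only involves modes lying in the resonant module $\Lambda(\as)$, whose dimension is $\le d-1$ for $\|\as\|$ large (a rank--$d$ resonant module would force $\|\as\|\lesssim\langle\as\rangle^\delta$, impossible since $\delta<1$), and $\partial_\fhi Z(\as,\cdot)\equiv0$ when $\Lambda(\as)=\{0\}$. Hence $\dot\as$ lies in $V_{\Lambda(\as)}:=\Span_\R\Lambda(\as)$ up to the $O(\langle\as\rangle^{-M})$ force of $R_N$, while resonance forces $\|\Pi_{V_{\Lambda(\as)}}\as\|\lesssim\langle\as\rangle^\delta$. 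Because the frequency map of $h_0$ is the identity $\as\mapsto\as$, I expect the usually delicate geometric part of Nekhoroshev's theorem to collapse to the following dichotomy: on any stretch of orbit contained in a fixed resonant block the transversal component $\Pi_{V_\Lambda^\perp}\as$ is constant up to a drift of speed $O(\langle\as\rangle^{-M})$, so that schematically $\|\as(t)\|^2\lesssim\langle\as(t)\rangle^{2\delta}+\|\as(0)\|^2+|t|\,\langle\as(t)\rangle^{1-M}$ as long as $\|\as(t)\|\le2\|\as_0\|$, which prevents $\|\as\|$ from doubling before a time $\gtrsim\|\as_0\|^{\sigma_N}$; and passage between blocks, again driven only by $R_N$, is equally slow. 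Organising the finitely many blocks into a hierarchy and assembling these estimates should give the claim of Step~1, and transporting the conclusion through $\Phi$ (which distorts $\|\as_0\|$ by at most a bounded factor, absorbed in constants) gives it for \eqref{hami}. Even though the identity frequency map and the dilated resonant regions make this far lighter than in the analytic theory, the construction of the covering of $\as$--space by resonant blocks and the control of the transversal drift over the extended time scale is the step I expect to require genuine care.

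\emph{Step 3 (iteration to infinite time).} Given $\epsilon\in(0,1)$, I choose $N$ with $\sigma_N\ge\epsilon^{-1}$ and set $R_\epsilon:=R_N$. Assume $\|\as_0\|\ge R_\epsilon$, put $\tau_0:=0$, and let $\tau_j$ ($j\ge1$) be the first time after $\tau_{j-1}$ with $\|\as(\tau_j)\|=2^j\|\as_0\|$, when such a time exists. Applying Step~1 on $[\tau_{j-1},\tau_j]$ to the solution --- whose action there has size $2^{j-1}\|\as_0\|\ge R_\epsilon$ --- gives $\tau_j-\tau_{j-1}\ge c_N(2^{j-1}\|\as_0\|)^{\sigma_N}$; summing this geometric series yields $\tau_n\ge c_N'(2^n\|\as_0\|)^{\sigma_N}$. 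Consequently, for $\tau_n\le t<\tau_{n+1}$ one gets $\|\as(t)\|<2^{n+1}\|\as_0\|\le 4(t/c_N)^{1/\sigma_N}\le 4(t/c_N)^\epsilon$, while $\|\as(t)\|\le2\|\as_0\|$ for $t\le\tau_1$. Comparing both bounds with $16\|\as_0\|\langle t/\|\as_0\|\rangle^\epsilon$ --- which dominates $16\|\as_0\|^{1-\epsilon}t^\epsilon$ in the long--time range and $16\|\as_0\|$ in the short--time range --- and enlarging $R_\epsilon$ if necessary yields \eqref{main.res} for all $t\in\R$, the rescaled time $\langle t/\|\as_0\|\rangle$ absorbing the $\|\as_0\|$--dependence; undoing $\Phi$ costs only a further bounded constant, absorbed in the $16$.
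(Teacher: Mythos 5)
Your proposal follows the same architecture as the paper: a globally defined normal form built on the dilated resonance condition $|k\cdot\as|<\|k\|\,\|\as\|^\delta$, $\|k\|\leq\|\as\|^\mu$ (Theorem \ref{norm.form}), a one-step Nekhoroshev confinement estimate (Theorem \ref{nekho}), and a dyadic iteration over doubling times to pass to infinite time. Step~1 is essentially the paper's Section~3 (the choice $\delta\in(\tb/2,1)$ matches $\da=\min\{2\delta-\tb,\delta\}>0$, and your $\tb_1=\max(2\tb-2\delta,\tb-\delta)$ is the same contraction). Step~3 reproduces the paper's argument in the proof of Theorem~\ref{main} almost verbatim, up to the explicit interpolating function $\tilde f$ and the bookkeeping on constants.

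The gap is in Step~2, and it is precisely the point where you flag ``genuine care''. You assert the dichotomy: within a block, the transversal component drifts at speed $O(\langle\as\rangle^{-M})$; and ``passage between blocks, again driven only by $R_N$, is equally slow.'' The second half is false. In the normal-form coordinates the vector field $-\partial_\fhi Z$ is supported in $M_{\R}$ and is \emph{not} small --- it is $O(\langle\as\rangle^{\tb})$ --- so the parallel drift is fast and can carry the orbit to the boundary of the fast-drift plane in a time that has nothing to do with $\|\as_0\|^{\sigma_N}$. What makes the argument work is not slowness of the passage but the structure of where the passage lands: by the separation of resonances (Lemma~\ref{separo}) a point on $\partial\Pi^{(s)}_M(\as')$ must lie on the boundary of some $E^{(s')}_{M'}$ with $s'<s$ (Remark~\ref{esco}), and the confinement is then obtained by \emph{induction on the dimension $s$} of the resonant module, together with the diameter bound $\mathrm{diam}\,\Pi^{(s)}_M(\as)\lesssim\|\as\|^{\delta_{s+1}}$ (Lemma~\ref{lemma.diametri}) which guarantees the fast excursion cannot increase $\|\as\|$ appreciably. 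This induction is the entire content of the paper's Theorem~\ref{nekho}; your ``organising the blocks into a hierarchy'' points in this direction, but the claim that the inter-block passage is driven only by $R_N$ replaces the inductive mechanism with a statement that does not hold. As written, the argument would also not control an orbit that cycles quickly between several blocks of the same dimension $s$ --- ruling that out again requires the separation-of-resonances lemma, which needs the hierarchy of constants $\tC_1<\cdots<\tC_d$, $\tD_1<\cdots<\tD_d$ and the graded exponents $\delta_s=\delta+\tfrac{s(s-1)}{2}\mu$ that your sketch does not introduce.

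A minor additional remark: your short-time comparison in Step~3 should also absorb the distortion of the normal-form change of variables into the constant; the paper does this by passing from the factor $4$ to $16$ in \eqref{main.res}, and you correctly note that ``undoing $\Phi$ costs only a further bounded constant.''
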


Then, by compactness of the ball of radius $R_\epsilon$, it is easy to
obtain the following corollary
\begin{corollary}
  \label{main.co}
Assume that $P\in C^{\infty}_b(\R;S^\tb_1)$
with $\tb<\td$, then $\forall \epsilon>0$ 
$\exists C_\epsilon$, s.t. along the solutions of the Cauchy problem for the Hamilton
equation of \eqref{hami} with initial datum $\as_0$ one has
\begin{equation}
  \label{main.res}
\left\| \as (t)\right\|\leq C_\epsilon \left(1+\left\|\as_0\right\|\right)
\left\langle {t}\right\rangle^\epsilon
 \ ,\quad \forall t\in\R\ .
  \end{equation}
\end{corollary}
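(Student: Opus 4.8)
The plan is to derive Corollary~\ref{main.co} from Theorem~\ref{main} by a compactness argument that handles the ``small'' initial data not covered by the theorem's hypothesis $\|\as_0\|\ge R_\epsilon$.

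Fix $\epsilon>0$ and let $R_\epsilon$ be the radius furnished by Theorem~\ref{main}. The phase space splits according to whether $\|\as_0\|\ge R_\epsilon$ or $\|\as_0\|<R_\epsilon$. In the first case Theorem~\ref{main} applies directly and gives
\[
\|\as(t)\|\le 16\|\as_0\|\left\langle\frac{t}{\|\as_0\|}\right\rangle^\epsilon
\le 16\|\as_0\|\langle t\rangle^\epsilon
\le 16\,(1+\|\as_0\|)\langle t\rangle^\epsilon,
\]
where in the middle inequality I used that $\|\as_0\|\ge R_\epsilon\ge 1$ (one may always enlarge $R_\epsilon$ so that $R_\epsilon\ge1$), hence $\langle t/\|\as_0\|\rangle\le\langle t\rangle$. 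So in this regime the conclusion holds with constant $16$.

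For the second regime, $\|\as_0\|<R_\epsilon$, I argue by compactness in time. The vector field of \eqref{hami} is $\dot\as=-\partial_\fhi P$, $\dot\fhi=\as+\partial_\as P$; since $P\in C^\infty_b(\R;S^\tb_1)$ with $\tb<\td=2$, one has $\|\partial_\fhi P(\as,\fhi,t)\|\lesssim\langle\as\rangle^\tb$ uniformly in $t$, so $\frac{d}{dt}\|\as(t)\|^2\lesssim\langle\as(t)\rangle^{1+\tb}$ with $1+\tb<3$; a Gronwall-type comparison then shows $\|\as(t)\|$ cannot escape to infinity in finite time and, more precisely, that on any finite interval $|t|\le T$ one has $\|\as(t)\|\le K(T,R_\epsilon)$ for a constant depending only on $T$ and $R_\epsilon$ (because $\|\as_0\|<R_\epsilon$). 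Now choose $T_\epsilon$ large enough that $\langle t\rangle^\epsilon\ge 1$ forces... more carefully: pick $T_\epsilon$ so that for $|t|\ge T_\epsilon$ one already has some control, and for $|t|\le T_\epsilon$ use the a priori bound $K(T_\epsilon,R_\epsilon)$. For $|t|\ge T_\epsilon$, if at some time $|t_1|\ge T_\epsilon$ the orbit satisfies $\|\as(t_1)\|\ge R_\epsilon$, apply Theorem~\ref{main} starting from $t_1$ (the hypotheses are time-translation invariant since $P\in C^\infty_b$); otherwise $\|\as(t)\|<R_\epsilon$ on the whole interval. In all cases $\|\as(t)\|$ is bounded by $C_\epsilon:=\max\{16,\;K(T_\epsilon,R_\epsilon),\;R_\epsilon\cdot 16\langle\cdots\rangle^\epsilon,\dots\}$ times $(1+\|\as_0\|)\langle t\rangle^\epsilon$, by absorbing the (finitely many, bounded) quantities into $C_\epsilon$; note $(1+\|\as_0\|)\ge1$ and $\langle t\rangle^\epsilon\ge1$ make this absorption legitimate.

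The only mildly delicate point is the finite-time a priori bound ensuring no blow-up before time $T_\epsilon$ for data in the compact ball $\{\|\as_0\|\le R_\epsilon\}$; this is where the hypothesis $\tb<2$ (equivalently $1+\tb<3$, or even just $\tb<\infty$ together with the sub-quadratic growth) is used, via the differential inequality for $\|\as(t)\|$ and Gronwall, together with continuous dependence on initial data and compactness of the ball. Everything else is bookkeeping: splitting into the two regimes, using time-translation invariance of the standing assumption $P\in C^\infty_b(\R;S^\tb_1)$ to reapply Theorem~\ref{main} from any later time, and enlarging the constant to cover the compact set of exceptional initial data. I expect this last enlargement step, rather than any real analysis, to be the ``main obstacle'' in the sense that it requires care to state uniformly in $\as_0$.
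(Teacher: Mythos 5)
Your overall strategy — split according to $\|\as_0\|\geq R_\epsilon$ vs.\ $\|\as_0\|<R_\epsilon$, handle the first regime by Theorem \ref{main} directly, and handle the second by a compactness-type argument together with time-translation invariance of the hypothesis $P\in C^\infty_b(\R;S^\tb_1)$ — is exactly what the paper has in mind when it writes ``by compactness of the ball of radius $R_\epsilon$''. However, the step you flag as ``the only mildly delicate point'' contains a genuine gap. You argue that the differential inequality $\frac{d}{dt}\|\as\|^2\lesssim\langle\as\rangle^{1+\tb}$ plus Gronwall rules out finite-time blow-up because $1+\tb<3$. This does not work: the comparison ODE $\dot u\lesssim u^{(1+\tb)/2}$ admits finite-time blow-up as soon as $(1+\tb)/2>1$, i.e.\ as soon as $\tb>1$, which is allowed by the hypothesis $\tb<\td=2$. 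So for $\tb\in(1,2)$ your Gronwall argument does not establish the a priori bound $K(T_\epsilon,R_\epsilon)$, and the subsequent compactness/continuous-dependence step inherits the same gap (the supremum over a compact ball is finite only if you already know there is no blow-up).

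The fix is to drop Gronwall (and the auxiliary time $T_\epsilon$) entirely and let Theorem \ref{main} itself do the work of ruling out blow-up. For $\|\as_0\|<R_\epsilon$, consider the solution on its maximal interval of existence and let $t_1:=\inf\{t>0:\|\as(t)\|\geq R_\epsilon\}$ (similarly on the negative side). If $t_1=+\infty$ the orbit remains in $B_{R_\epsilon}$ for all $t\geq 0$, hence is global there and trivially satisfies $\|\as(t)\|\leq R_\epsilon\leq R_\epsilon(1+\|\as_0\|)\langle t\rangle^\epsilon$. If $t_1<\infty$ then $\|\as(t_1)\|=R_\epsilon$, and Theorem \ref{main} applied at the time-translated initial instant $t_1$ (legitimate since $P\in C^\infty_b$, as you observe) gives, for $t\geq t_1$,
\begin{equation*}
\|\as(t)\|\leq 16\,R_\epsilon\left\langle\frac{t-t_1}{R_\epsilon}\right\rangle^\epsilon
\leq 16\,R_\epsilon\,\langle t-t_1\rangle^\epsilon
\leq 16\,R_\epsilon\,\langle t\rangle^\epsilon
\leq 16\,R_\epsilon\,(1+\|\as_0\|)\langle t\rangle^\epsilon\,,
\end{equation*}
using $R_\epsilon\geq 1$ and $0\leq t-t_1\leq t$. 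This bound in particular shows the solution cannot blow up, so the maximal interval is all of $\R$; no separate a priori estimate is needed. Combining the two regimes, $C_\epsilon:=16\max\{1,R_\epsilon\}$ works. This is the clean form of the compactness argument the paper alludes to; the rest of your write-up (including the garbled expression for $C_\epsilon$ at the end) should be reorganized around this exit-time dichotomy rather than around a pre-chosen $T_\epsilon$ and a Gronwall estimate.
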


Such  results have to be confronted with the corresponding quantum
result, in which one considers the Schr\"odinger equation with
Hamiltonian given by the quantisation of $H$. In this case it was
proved in \cite{BLM22b,QN} that the wave function $\psi(t)$ fulfils
$$
\left\|\psi(t)\right\|_{H^s}\leq
C_{\epsilon,s}\left\|\psi_0\right\|_{H^s}\langle
t\rangle^{\epsilon}\ ,\quad \forall t\in\R\ .
$$
Actually the question of the validity of a similar estimate in the
classical case was the main motivation for the present work.

As anticipated above, the main example we have in mind is that of a
particle in an electromagnetic field with Hamiltonian \eqref{ele}, in
which, in particular one has
$$
P(\as,\fhi,t)=-\as\cdot
A(\fhi,t)+\frac{\left\|A(\fhi,t)\right\|^2}{2}+\varphi(\fhi,t)\ .
$$
We thus get that maybe the energy of the particle grows to infinity,
but the average power at which energy is transferred to the particle
decrees faster than any power of time.

\section{Analytic Part}\label{NF.1}

We start by fixing some notations and definitions that will be used in
the rest of the paper. Given two real valued functions $f$ and $g$,
sometimes we will use the notation $f \lesssim g$ to mean that there
exists a constant $C>0$, independent of all the relevant quantities,
such that $f \leq C g$. If $f \lesssim g$ and $g \lesssim f$, we will
write $f \simeq g$.

 We will denote by $B_R(\as)$ the open
ball of radius $R$ centered $\as$.

Given a function $g$, we will denote by $X_g$ the corresponding
Hamiltonian vector field and by
\begin{equation}
  \label{poisson}
\left\{f;g\right\}:=dfX_g\equiv \sum_{j=1}^d\frac{\partial
  f}{\partial\fhi_j}\frac{\partial g}{\partial \as _j}-\frac{\partial
  g}{\partial\fhi_j}\frac{\partial f}{\partial \as _j} 
\end{equation}
the Poisson bracket of two functions. Remark that if $f\in\sy{m_1}$
and $g\in\sy{m_2}$, then
$\left\{f;g\right\}\in\sy{m_1+m_2-\delta}$. The gain of $\delta$ is
fundamental for the construction of the regularising transformation.

Given $\mu>0$ and $0<\delta<1$ (typically $\mu\ll 1$ and $\delta \simeq 1$), we give the following definitions:
\begin{definition}
	\label{res}
	We say that a point $\as \in \R^d\null$ is resonant with
	$k\in\Z^d\setminus\{0\} $ if
	\begin{equation}
	\label{reso.1}
	|\as\cdot k| \leq \|\as \|^\delta \|k\| \quad \text{and}\quad \|k\| \leq \|\as \|^\mu\,.
	\end{equation}
\end{definition}

\begin{definition}\label{def.nf}[Normal form]
	We say that a function $Z(\as ,\fhi)=\sum_{k\in\Z^d}\hat Z(\as )e^{ik\cdot\fhi}$ is \emph{in normal
		form} if all the points in supp($\hat Z_k(.))$ are
        resonant with $k$.
        \\
        \noindent We say that a function $Z(\as ,\fhi,t)$ is in resonant
        normal form if this is true in the above sense, for any fixed time $t$.
\end{definition}

In order to characterise the properties of the Lie transform we need
to introduce also the following class of functions
\begin{definition}
  \label{resti}
A  function $R$ will be said to be a remainder of order $N$ and we
will write $R\in\cR^N$ if  
\begin{align}
	\label{resti.1}
	\left|\partial^\alpha_\fhi\partial^\beta_\as  R(\fhi,\as )\right|\leq
	C_{\alpha,\beta}\langle \as \rangle^{-N}\ ,\quad\\ \forall
	\alpha,\beta\in\N^n\, ,\ |\alpha|+|\beta|\leq 2\ , 
         \forall (\fhi,\as ) \in \T^d\times\R^d\null\ .
\end{align}
\end{definition}

In the following we will use time dependent
transformations $\Phi(\as ,\fhi,t)$ with the property that for any fixed $t$ they are
canonical. In this case it is easy to see that the change of
coordinates $(\as,\fhi)=\Phi(\as',\fhi',t)$ transforms the equations
of motions of a Hamiltonian $H$ to the equations of motion of a new
Hamiltonian $H'$. In this case we will say that $\Phi$ conjugates $H$
and $H'$.

We are going to prove the following normal form theorem

\begin{theorem}\label{norm.form}[Normal Form]
	Let $H$ be given by \eqref{hami}, with $P\in
        \timereg\left(\R;S^{\tb}_1\right)$, $\tb<\td$. Fix $N\gg1$,
        then there exists
        $0<\delta_*<\mom$, $\mu_*>0$ such that, if
        $\delta_*<\delta<\mom$, $0<\mu<\mu_*$, then
there exists a time dependent canonical
        transformation $\cT$
 which conjugates $H$ to 
	\begin{equation}\label{forma.normale}
	H^{(N)}:=	h_0+Z_N(t) + R^{(N)}(t)\,,
	\end{equation}
	with
 $Z_N \in \timereg\left(\R; S^\tb_\delta\right)$ in normal form, while
$R^{(N)} \in \timereg\left(\R;\cR^{
	  N}\right)$ is a remainder of order $N$.
        
 Denoting $(\as,\fhi)=\cT(\as',\fhi',t)$, one has
                  $\left\|\as-\as'\right\|\leq C\left\|
 \as\right\|^{\tb-\delta}$.
\end{theorem}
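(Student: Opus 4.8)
The plan is to prove the Normal Form Theorem \ref{norm.form} by a single-step (or finitely iterated) Lie transform that kills the non-resonant Fourier modes of $P$, exploiting that in the non-resonant region the small denominators are large. First I would decompose the perturbation as $P = \Pi_{res} P + \Pi_{nr} P$, where $\Pi_{res}$ projects onto those Fourier coefficients $\hat P_k(\as)$ whose $\as$-support is resonant with $k$ in the sense of Definition \ref{res}, and $\Pi_{nr}$ is the complementary projection; using a suitable smooth cutoff (in $\as$ and a truncation in $k$ at $\|k\|\lesssim\langle\as\rangle^\mu$) one checks that both pieces are in $C^\infty_b(\R;S^\tb_1)$, the resonant part already being ``in normal form'' modulo a remainder in $\cR^N$ coming from the high-$k$ tail (this uses that $P\in C^\infty_b$, so its Fourier coefficients decay faster than any power of $\|k\|$, and that $\langle\as\rangle^{\tb}\|k\|^{-M}\lesssim\langle\as\rangle^{-N}$ once $\|k\|\gtrsim\langle\as\rangle^\mu$ and $M$ is large enough).

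Next I would solve the homological equation $\{h_0;\chi\} = \Pi_{nr}P$. Since $\{h_0;\chi\}=-\omega(\as)\cdot\partial_\fhi\chi$ with $\omega(\as)=\as$, Fourier-expanding gives $\hat\chi_k(\as) = \frac{\im}{\as\cdot k}\,\widehat{(\Pi_{nr}P)}_k(\as)$; on the support of $\widehat{(\Pi_{nr}P)}_k$ we have $\|k\|\le\langle\as\rangle^\mu$ but $|\as\cdot k| > \|\as\|^\delta\|k\|$, so the denominator is bounded below by $\|\as\|^\delta\|k\|\simeq\langle\as\rangle^\delta\|k\|$. One then checks, differentiating the quotient and using that each $\as$-derivative of $1/(\as\cdot k)$ costs a factor $\|k\|/|\as\cdot k|\lesssim\langle\as\rangle^{-\delta}$ (plus the harmless polynomial-in-$\|k\|\le\langle\as\rangle^\mu$ factors, absorbed by choosing $\mu$ small), that $\chi\in C^\infty_b(\R;S^{\tb-\delta}_\delta)$. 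Being time dependent, $\chi$ also carries the explicit $t$-derivative contribution $\partial_t\chi$, again in $C^\infty_b(\R;S^{\tb-\delta}_\delta)$, which must be included in the transformed Hamiltonian.

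Then I would push $H$ through the time-$1$ flow $\cT$ of $X_\chi$ (canonical at each fixed $t$). By Taylor expansion with integral remainder, $H\circ\cT = h_0 + \Pi_{res}P + \big(\{h_0;\chi\}+\Pi_{nr}P\big) - \partial_t\chi + \int_0^1(1-\tau)\{\{H;\chi\};\chi\}\circ\Phi^\tau\,d\tau + \{\,\Pi_{res}P+\Pi_{nr}P\,;\chi\}$-type terms; the bracket $\{h_0;\chi\}+\Pi_{nr}P$ vanishes by construction, and every remaining term is a Poisson bracket with $\chi\in S^{\tb-\delta}_\delta$ (or the high-$k$ tail). Using the symbolic calculus rule $\{S^{m_1}_\delta;S^{m_2}_\delta\}\subset S^{m_1+m_2-\delta}_\delta$ recalled in the excerpt, each bracket lowers the order by $\delta$: the double bracket lands in $S^{2\tb-2\delta-\delta}_\delta$, and more generally after collecting one obtains terms of order $\tb - \ell(\delta-(\tb-\delta)) = \tb-\ell(2\delta-\tb)$ decreasing in $\ell$, provided $2\delta>\tb$, i.e. $\delta>\tb/2$; since we also need $\delta<1$ this fixes the admissible window and dictates $\delta_*$. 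Iterating the scheme $N'$ times (with $N'$ chosen so that $\tb-N'(2\delta-\tb)<-N$) and collecting all sub-leading orders plus the high-$k$ tails into $R^{(N)}$ gives the claimed form $h_0+Z_N(t)+R^{(N)}(t)$ with $Z_N\in C^\infty_b(\R;S^\tb_\delta)$ and $R^{(N)}\in C^\infty_b(\R;\cR^N)$. The displacement bound $\|\as-\as'\|\lesssim\|\as\|^{\tb-\delta}$ is just the estimate on $X_\chi$: $\partial_\fhi\chi\in S^{\tb-\delta}_\delta$, and the flow stays in a region where $\langle\as\rangle\simeq\langle\as'\rangle$.

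The main obstacle is the bookkeeping of the non-resonant projection together with the symbol estimates: one must verify that $\Pi_{nr}P$, defined via a cutoff adapted to the $\as$-dependent resonant condition $|\as\cdot k|\le\|\as\|^\delta\|k\|$, genuinely lies in $S^\tb_1$ — differentiating the cutoff in $\as$ produces factors of $\|k\|/\|\as\|^\delta$ per derivative, and one has to see that on the cutoff transition layer $\|k\|\lesssim\langle\as\rangle^\mu$ so this is controlled by $\langle\as\rangle^{\mu-\delta}\le 1$; combined with verifying the divisor estimate is uniform once $\mu$ is small, this is where the constraints $\delta_*<\delta<1$ and $0<\mu<\mu_*$ really come from, and it is the only delicate point, the rest being the now-standard iterative Lie-transform machinery.
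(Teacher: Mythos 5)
Your proposal is correct and essentially reproduces the paper's proof: the same resonant/non-resonant/smoothing cutoff decomposition of $P$, the same solution of the cohomological equation via division by $\as\cdot k$ on the non-resonant support (giving a generator in $S^{\tb-\delta}_\delta$), and the same finitely-iterated time-dependent Lie transform with per-step gain $\min\{2\delta-\tb,\delta\}$, with $\delta_*$ dictated by requiring this gain to be positive. The only points worth tightening are a couple of slips in the double-bracket order bookkeeping (the paper tracks four families of exponents $e_1,\dots,e_4$ and also uses $\delta>2/3$ in the cohomological lemma, so $\delta_*$ is not merely $\tb/2$), but these do not change the structure of the argument.
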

The rest of this section is devoted to the proof of this theorem.

The idea is to perform a sequence of canonical transformations
conjugating the original system to a normal form plus a remainder whose
growth at infinity decreases at each step. Eventually it becomes a
function arbitrarily decreasing in the action space.

\subsection{Time dependent Lie transform}

The canonical transformations will be constructed as Lie transforms
generated by time dependent symbols. So, first we recall the main
definitions and properties.

Consider a 
family of time dependent Hamiltonians $g(\as ,\fhi,t)$, but think of $t$
as an external parameter. Denote by
$\Phi_g^\tau(\as ,\fhi, t)$ the time $\tau$ flow it generates, namely the
solution of
\begin{equation}
  \label{chitau}
\frac{d\fhi}{d\tau}=\frac{\partial g}{\partial \as }(\as ,\fhi,t)\ , \quad
\frac{d \as }{d\tau}=-\frac{\partial g}{\partial \fhi}(\as ,\fhi,t) \ .
\end{equation}
In the case $g\in\sy m$ with $m\leq 1$ the flow of \eqref{chitau} is
globally defined. This is the only situation we will encounter.

\begin{definition}
  \label{lie}
The time dependent coordinate transformation
\begin{equation}
  \label{timedep}
(\as ,\fhi)=\Phi_g(\as ',\fhi',t)\equiv
  \Phi^1_g(\as ',\fhi',t):=\left.\Phi_g^\tau(\as ',\fhi',t)\right|_{\tau=1}\
\end{equation}
is called the time dependent Lie transform generated by $g$.
\end{definition}

\begin{remark}
  \label{how}
The time dependent Lie transform $\Phi_g$ conjugates a Hamiltonian $f$
to the Hamilton
\begin{align}
  \label{timetras}
  f'(\as ',\fhi',t) :=f(\Phi_g(\as ',\fhi',t))-\Psi_g(\as ',\fhi',t)\ ,
\\
\label{psi}
\Psi_g(\as ',\fhi',t):=\int_0^1  \frac{\partial
  g}{\partial t}\left(\Phi_g^\tau(\as ',\fhi',t)\right) \, d\tau \ .
\end{align}
This can be easily seen by working in the extended phase space in
which time is added as a new variable.
\end{remark}

Given a function $f\in\sy {m}$, we study $f\circ\Phi_g$.  We start
by the time independent case

\begin{lemma}
  \label{resto.Lie}
  Let $g\in\sy{\eta}$ be a time independent function, and
  $f\in\sy{m}$, with $\eta<\delta$, then, for any
  positive $N$, one has
  \begin{equation}
    \label{lie.tot}
f\circ\Phi^1_g=\sum_{l=0}^N\frac{f_{l}}{l!}+R_{m-(N+1)(\delta-\eta)}\ ,
  \end{equation}
  with $f_l\in \sy {m-l(\delta-\eta)}$, precisely given by
\begin{equation}
  \label{seqlie}
f_{0}:=f\ ,\quad f_{l}:=\poi{f_{l-1}}{g}\equiv
\left.\frac{d^l}{dt^l}\right|_{t=0} f\circ\Phi^t_g\ , \quad l\geq1\ ,
\end{equation}
and $\cR_{m-(N+1)(\delta-\eta)}$ a remainder of order ${(N+1)(\delta-\eta)}-m$.
\\
Furthermore, if one denotes $(\as,\fhi)=\Phi_g(\as',\fhi',t)$, one has
\begin{equation}
  \label{defo}
\left\|\as-\as'\right\|\leq C\left\|\as\right\|^{\eta}\ .
  \end{equation}
\end{lemma}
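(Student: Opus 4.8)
The plan is to expand $f\circ\Phi_g^t$ in a Taylor series in $t$ around $t=0$ and take $t=1$. By the classical Lie series identity, $\frac{d}{dt}(f\circ\Phi_g^t)=\{f;g\}\circ\Phi_g^t$, so iterating gives $\frac{d^l}{dt^l}(f\circ\Phi_g^t)\big|_{t=0}=f_l$ with $f_l$ defined recursively by $f_l=\{f_{l-1};g\}$. Taylor's formula with integral remainder then yields
\begin{equation*}
f\circ\Phi_g^1=\sum_{l=0}^N\frac{f_l}{l!}+\frac{1}{N!}\int_0^1(1-t)^N\bigl(\{f_N;g\}\circ\Phi_g^t\bigr)\,dt\ .
\end{equation*}
The first task is the symbol bookkeeping for the finite sum: since $g\in\sy\eta$ and the Poisson bracket of a symbol in $\sy{m_1}$ with one in $\sy{m_2}$ lands in $\sy{m_1+m_2-\delta}$ (as recalled after \eqref{poisson}), an easy induction gives $f_l\in\sy{m-l(\delta-\eta)}$, using $\eta-\delta=-(\delta-\eta)$. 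Here $\eta<\delta$ guarantees each bracket strictly lowers the order, so the expansion genuinely improves.

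The second task is to control the remainder term $\frac1{N!}\int_0^1(1-t)^N(\{f_N;g\}\circ\Phi_g^t)\,dt$ and show it lies in $\cR_{m-(N+1)(\delta-\eta)}$, i.e.\ that it and its derivatives up to order $2$ in $(\fhi,\as)$ are bounded by $\langle\as\rangle^{m-(N+1)(\delta-\eta)}$ (which, for $N$ large, is $\langle\as\rangle^{-M}$ for any prescribed $M$). The integrand $\{f_N;g\}$ is a symbol in $\sy{m-(N+1)(\delta-\eta)}$ by the same bracket rule. The point requiring care is that composition with the flow $\Phi_g^t$ preserves symbol classes with uniform seminorm bounds in $t\in[0,1]$: one differentiates $f\circ\Phi_g^t$ via the chain rule, and one needs estimates on the derivatives of the flow map. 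These follow by differentiating the ODE \eqref{chitau} in the initial data (Grönwall), using that $g\in\sy\eta$ with $\eta<\delta<1$ so that $X_g$ is a symbol of negative-or-subunit order; in particular $\partial_\as\Phi_g^t=I+O(\langle\as\rangle^{\eta-\delta})$ and the $\fhi$-components stay bounded, so the $\langle\as\rangle$-weights are not degraded and may even be slightly improved, which is harmless. I would phrase this as a short sublemma: \emph{if $g\in\sy\eta$ with $\eta<\delta$ and $h\in\sy{m'}$, then $h\circ\Phi_g^t\in\sy{m'}$ uniformly for $t\in[0,1]$}.

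Finally, the displacement bound \eqref{defo}: from the $\as$-equation in \eqref{chitau}, $\as-\as'=-\int_0^1\partial_\fhi g(\Phi_g^\tau(\as',\fhi',t))\,d\tau$, and $\partial_\fhi g\in\sy\eta$, so $\|\as-\as'\|\le C\sup_{\tau}\langle\as(\tau)\rangle^\eta$; a continuity/bootstrap argument along the flow (the displacement is lower order than $\|\as\|$ itself since $\eta<1$) shows $\langle\as(\tau)\rangle\simeq\langle\as\rangle$ along the trajectory, giving $\|\as-\as'\|\le C\|\as\|^\eta$. The main obstacle is the second task — proving that composition with the Lie flow preserves the symbol classes with the correct weights and uniform-in-$\tau$ seminorms — since that is where the interplay $\eta<\delta<1$ is actually used and where the multivariate Faà di Bruno bookkeeping lives; everything else is the formal Lie series identity plus Taylor's theorem.
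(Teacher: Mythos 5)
Your proposal is correct and follows essentially the same route as the paper: the paper's proof is exactly Taylor's theorem with integral remainder applied to $t\mapsto f\circ\Phi_g^t$, with the symbol-class bookkeeping for $f_l$, the remainder, and the displacement bound left to the reader as "immediate." Your version simply makes explicit the induction for $f_l\in\sy{m-l(\delta-\eta)}$, the sublemma that composition with $\Phi_g^t$ preserves symbol classes uniformly for $t\in[0,1]$, and the bootstrap for \eqref{defo}, all of which are the details the paper implicitly invokes.
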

\proof Just use  the formula for the remainder of the Taylor series (in
time) which gives
$$
f\circ\Phi^1_g=\sum_{l=0}^N\frac{f_{l}}{l!}+ \frac{1}{N!}\int_0^1
(1+s)^Nf_{N+1}\circ\Phi^s_gds\ ,
$$
from which the thesis immediately follows. 
\qed

In particular
we have the following corollary which covers the case of the time
dependent Lie transform and which is proved by simply remarking that
$\Psi_g\in C^\infty(\R;\sy\eta)$ up to a remainder of arbitrary order.

\begin{corollary}
  \label{time dep}
  Let $g\in C^\infty_b(\R;\sy{\eta})$ with $\eta<\delta$;  denote by
  $\Phi_g$ the time dependent Lie transform generated by $g$. Let 
  $f\in\sy{m}$, then, for any $N$,  one has
  \begin{equation}
    \label{lie.tot1}
    f\circ\Phi_g=f+\left\{f;g\right\}+C^{\infty}(\R;
    \sy{m-2(\delta-\eta)})+C^\infty(\R;\sy\eta)+C^\infty(\R;\cR^N)\ . 
  \end{equation}
 By  $C^\infty(\R;\sy\eta)$ in the above formula, we mean a function
 belonging to such a space and similarly for the other terms. 
\end{corollary}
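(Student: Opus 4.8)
The plan is to prove Corollary~\ref{time dep} by combining the time-independent expansion of Lemma~\ref{resto.Lie} with the extra term $\Psi_g$ that appears in the time-dependent conjugation formula \eqref{timetras}--\eqref{psi}. First I would apply Lemma~\ref{resto.Lie} at each fixed $t$: since $g(\cdot,\cdot,t)\in\sy\eta$ with $\eta<\delta$ and $f\in\sy m$, formula \eqref{lie.tot} with $N$ (suitably large so that $m-(N+1)(\delta-\eta)\leq -N$, which fixes how large $N$ must be chosen in terms of the $N$ appearing in $\cR^N$) gives
\[
f\circ\Phi^1_g(\cdot,\cdot,t)=f+\{f;g\}+\sum_{l=2}^N\frac{f_l}{l!}+R_{m-(N+1)(\delta-\eta)}(t)\ ,
\]
where $f_l\in\sy{m-l(\delta-\eta)}$, so already $f_2\in\sy{m-2(\delta-\eta)}$ and all higher $f_l$ lie in that same space (since $m-l(\delta-\eta)\leq m-2(\delta-\eta)$ for $l\geq2$); the remainder lies in $\cR^N$. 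The only thing to check here is that the estimates in Lemma~\ref{resto.Lie} are uniform in $t$ and survive $t$-differentiation: this is exactly the statement that $g\in C^\infty_b(\R;\sy\eta)$, which propagates to $f\circ\Phi_g^\tau$ because the flow $\Phi_g^\tau$ depends smoothly on $t$ with seminorm bounds uniform in $t$ (one differentiates the flow equation \eqref{chitau} in $t$ and uses Gronwall, as is standard). Hence each $f_l$ and the remainder are in the claimed $C^\infty_b(\R;\cdot)$ spaces.

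Next I would handle the term $\Psi_g$ from Remark~\ref{how}. By \eqref{psi}, $\Psi_g(\cdot,\cdot,t)=\int_0^1 \frac{\partial g}{\partial t}\circ\Phi_g^\tau\,d\tau$. Since $g\in C^\infty_b(\R;\sy\eta)$ we have $\partial_t g\in C^\infty_b(\R;\sy\eta)$, and applying Lemma~\ref{resto.Lie} to $\partial_t g$ (in place of $f$, with $m=\eta$) gives $\partial_t g\circ\Phi_g^\tau=\partial_t g+\{\partial_t g;g\}\cdot\tau+\cdots$, where the leading term is in $\sy\eta$ and all corrections are in $\sy{\eta-(\delta-\eta)}\subset\sy\eta$ plus a remainder in $\cR^N$. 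Integrating over $\tau\in[0,1]$ preserves these classes (seminorms of an integral are bounded by the sup of the seminorms), so $\Psi_g\in C^\infty_b(\R;\sy\eta)+C^\infty_b(\R;\cR^N)$. Subtracting $\Psi_g$ as in \eqref{timetras} then contributes precisely the $C^\infty(\R;\sy\eta)$ summand (absorbing the remainder into the $\cR^N$ term), which explains why that term—absent in the time-independent Lemma—appears in \eqref{lie.tot1}.

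Finally I would collect terms: $f\circ\Phi_g = f + \{f;g\} + \big(\sum_{l\geq2}f_l/l!\big) - \Psi_g + R$, where the bracketed sum lies in $C^\infty(\R;\sy{m-2(\delta-\eta)})$, $\Psi_g$ is absorbed into $C^\infty(\R;\sy\eta)+C^\infty(\R;\cR^N)$, and $R\in C^\infty(\R;\cR^N)$, yielding exactly \eqref{lie.tot1}. I do not expect a genuine obstacle here—the statement is essentially a bookkeeping corollary—but the one point requiring care is the uniformity in $t$ of all the symbol-seminorm estimates (and of their $t$-derivatives) coming out of the Taylor-with-remainder argument in Lemma~\ref{resto.Lie}; this is where one must invoke $g\in C^\infty_b$ rather than merely $g\in C^\infty$, and where, if one wanted to be fully rigorous, one would spell out the Gronwall estimate for $\partial_t^j\Phi_g^\tau$. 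Everything else is a direct translation of the time-independent computation plus the elementary observation recorded in Remark~\ref{how}.
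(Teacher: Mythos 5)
Your argument is correct and follows precisely the route the paper indicates: apply Lemma \ref{resto.Lie} at each fixed $t$ to get $f+\{f;g\}$ plus a $\sy{m-2(\delta-\eta)}$ tail and a $\cR^N$ remainder, then account for the extra $-\Psi_g$ term coming from Remark \ref{how}, placing it in $C^\infty(\R;\sy\eta)+C^\infty(\R;\cR^N)$ by running the same Taylor-with-remainder expansion on $\partial_t g$ and integrating over $\tau$. The paper compresses all of this into the single sentence ``proved by simply remarking that $\Psi_g\in C^\infty(\R;\sy\eta)$ up to a remainder of arbitrary order,'' so what you have supplied is exactly the omitted bookkeeping, including the correct observation that the $C^\infty_b$ regularity of $g$ is what makes the seminorm estimates uniform in $t$ and stable under $\partial_t$.
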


From now on, time will only play the role of a parameter, so we will
omit to write explicitly this variable and omit to specify the dependence on
it, which will always be of class $C^\infty_b$.

We are now ready for the construction of the normal form
transformation. Before starting we change the
family of seminorms that we will use for symbols. Actually we will use
them explicitly only in the proof of Lemma \ref{smoothing.l}.

	\begin{remark} \label{semin vere}
One has that $f \in \sy m$ if and only if for all
	 integers $N_1$ and $N_2$  there exists a positive constant $C^m_{\delta,N_1, N_2}$ such that
	\begin{equation} \label{def seminorme}
\wp^m_{\delta,N_1, N_2}(f):=	\sup_{\begin{subarray}{c}
		\as  \in \R^d,\ k \in \Z^d,\\ \alpha \in \N^d,\ |\alpha| = N_1
		\end{subarray}}  \left|\partial_\as ^{\alpha}
        \hat{f}_k(\as )\right| |k|^{N_2} \langle \as \rangle^{-(m-\delta|\alpha|)}  <\infty\,,
	\end{equation}
with $\hat f$ the Fourier coefficients of $f$.
        \end{remark}
As anticipated in the notation of equation \eqref{def seminorme}, in
the following we will use the constants $\wp^m_{\delta,N_1,N_2}$ as
seminorms.

We come to the normal form procedure:
we look for a generating function $g$ that we want to use to transform
$H$ to a normal form plus a remainder decaying at infinity faster than
$\left|\as\right|^\tb$. If $g\in\sy\eta$, with a suitable $\eta$ (as
it will occur), the Lie transform $\Phi_g$ conjugates $H$ to 
\begin{equation}
  \label{main.eq}
h_0+P+\left\{h_0;g\right\}+{\rm lower\ order\ terms}\ .
\end{equation}
So we look for a symbol $g$ s.t. $P+\left\{h_0;g\right\}$ is
in normal form. Actually we will construct a symbol $g$ with the
property that $P+\left\{h_0;g\right\}$ consists of a part in normal
form plus a part decaying at infinity faster than any inverse power of
$|\as |$.

\subsection{Solution of the Cohomological equation} \label{cutoffs} 

In this subsection we are going to prove the lemma of solution of the
Cohomological  equation
	\begin{equation}\label{solve.me}
\left\{h_0,g\right\}+f-Z\in\sy{-\infty}\ .
	\end{equation}
        It is
a small variant of Lemma 5.8 of \cite{BLM19}, we give the proof for
the sake of completeness.  
 
\begin{lemma}\label{lem.hom}
Let $\frac{2}{3}<\delta<1$, then the following holds true: $\forall
f\in\sy m$, there exist $g\in\sy{m-\delta}$, $Z\in\sy{m}$, with $Z$ in
normal form, s.t. \eqref{solve.me} holds.
\end{lemma}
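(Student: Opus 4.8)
The plan is to solve the cohomological equation $\{h_0,g\}+f-Z\in\sy{-\infty}$ Fourier-mode by Fourier-mode in the angle variables. Writing $f(\fhi,\as)=\sum_{k}\hat f_k(\as)e^{\im k\cdot\fhi}$ and looking for $g(\fhi,\as)=\sum_k\hat g_k(\as)e^{\im k\cdot\fhi}$, $Z(\fhi,\as)=\sum_k\hat Z_k(\as)e^{\im k\cdot\fhi}$, one computes $\{h_0,g\}=\sum_k \im(\as\cdot k)\,\hat g_k(\as)\,e^{\im k\cdot\fhi}$ since $\pa_{\as}h_0=\as$. For $k=0$ there is no small divisor issue: we simply set $\hat Z_0=\hat f_0$ and $\hat g_0=0$ (the $k=0$ component is automatically ``in normal form'' in the sense of Definition \ref{def.nf}, as $0$ is resonant with $0$). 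For $k\neq0$, the equation at each mode is $\im(\as\cdot k)\hat g_k(\as)+\hat f_k(\as)-\hat Z_k(\as)\in\sy{-\infty}$, so we must split $\hat f_k$ into a ``resonant'' part (where the divisor $\as\cdot k$ is small, which goes into $Z$) and a ``non-resonant'' part (where $\as\cdot k$ is large, which we divide by $\im(\as\cdot k)$ to define $\hat g_k$).

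The key device is a smooth partition of unity adapted to the resonance condition \eqref{reso.1}. First I would fix a cutoff $\chi\in C^\infty(\R)$ with $\chi\equiv1$ on $[-1,1]$ and $\chi\equiv0$ outside $[-2,2]$, and for each $k\neq 0$ define a smooth function $\chi_k(\as):=\chi\!\left(\frac{\as\cdot k}{\|\as\|^\delta\|k\|}\right)\,\chi\!\left(\frac{\|k\|}{\|\as\|^\mu}\right)$ (suitably modified for small $\|\as\|$ where the formula degenerates — e.g. multiply by a cutoff excising a fixed ball, absorbing the bounded-$\as$ region into the $\sy{-\infty}$ error, since on a compact set everything is a Schwartz-type remainder after multiplying by a further cutoff; this is the standard harmless modification). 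Then set $\hat Z_k:=\chi_k\,\hat f_k$ and $\hat g_k:=\frac{1-\chi_k}{\im(\as\cdot k)}\,\hat f_k$. By construction $\hat Z_k$ is supported in the set where $|\as\cdot k|\le 2\|\as\|^\delta\|k\|$ and $\|k\|\le 2\|\as\|^\mu$, which up to rescaling the constant $2$ is exactly the resonant set, so $Z$ is in normal form; and on the support of $1-\chi_k$ one has $|\as\cdot k|\ge\|\as\|^\delta\|k\|$, so the divisor is bounded below.

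The main work — and the step I expect to be the real obstacle — is the symbol estimate: proving $g\in\sy{m-\delta}$, i.e. that each $\as$-derivative $\pa_\as^\beta \hat g_k(\as)$ is bounded by $C_\beta\langle\as\rangle^{m-\delta-\delta|\beta|}\langle k\rangle^{-M}$ for every $M$ (using the seminorms \eqref{def seminorme}). The factor $\langle k\rangle^{-M}$ is free because $f\in\sy m$ gives arbitrary decay of $\hat f_k$ in $k$ and the $\chi(\|k\|/\|\as\|^\mu)$ cutoff only helps. The delicate point is the loss of powers of $\|\as\|$ when differentiating: each derivative of $\chi_k$ hitting the argument $\frac{\as\cdot k}{\|\as\|^\delta\|k\|}$ costs a factor $\|\as\|^{-\delta}$ from $\pa_\as(\as\cdot k/\|k\|)\cdot\|\as\|^{-\delta}\sim\|\as\|^{-\delta}$ (the $k$-direction derivative is $O(\|\as\|^{-\delta})$), which is exactly the gain $\delta$ per derivative that the class $\sy{\cdot}$ demands; and each derivative of $1/(\as\cdot k)$ on the support of $1-\chi_k$ costs $\pa_\as(\as\cdot k)/(\as\cdot k)^2 = k/(\as\cdot k)^2$, and since there $|\as\cdot k|\gtrsim\|\as\|^\delta\|k\|$ this is $\lesssim \|k\|/(\|\as\|^{2\delta}\|k\|^2)=\|\as\|^{-2\delta}\|k\|^{-1}$ — again controlled, with the extra $\|k\|^{-1}$ harmless. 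One must check that on the overlap region (where derivatives of $\chi_k$ are nonzero) one still has $|\as\cdot k|\simeq\|\as\|^\delta\|k\|$ so that the division by $\as\cdot k$ in $\hat g_k$ is legitimate there too. Assembling these by the Leibniz rule and using $\delta>\frac23$ to absorb the various lower-order contributions (in particular the $\|\as\|^{-\mu}$-type losses from the second cutoff, which are milder and where the hypothesis $\delta>2/3$ versus the relation between $\mu$ and $\delta$ enters — $\mu$ being chosen small), one obtains $\hat g_k\in\langle\as\rangle^{m-\delta}$-type bounds with the required derivative gains, hence $g\in\sy{m-\delta}$; and $Z=\chi\cdot f$-type bounds give $Z\in\sy m$ trivially since multiplying by the bounded, $\|\as\|^{-\delta}$-gaining cutoff $\chi_k$ keeps one in $\sy m$ (in fact one could even claim $\sy m$ is not improved because $\hat f_k$ itself is only $\sy m$). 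Finally, $\{h_0,g\}+f-Z=\sum_{k\ne0}\bigl[(1-\chi_k)\hat f_k + \chi_k\hat f_k - \hat f_k\bigr]e^{\im k\cdot\fhi}$ plus the bounded-$\as$ correction, and the bracket vanishes identically, so the only error is the Schwartz remainder supported in a fixed ball, which lies in $\sy{-\infty}$. This closes the proof modulo the routine but slightly tedious derivative bookkeeping described above.
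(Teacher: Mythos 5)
Your overall strategy — Fourier-mode decomposition, resonant/non-resonant cutoffs, dividing by $\as\cdot k$ on the non-resonant set, symbol estimates for $g$, smoothing tails into $\sy{-\infty}$ — is exactly the paper's. But there is a genuine gap in the way you define $\hat g_k$, and it invalidates the formula as written.

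You set $\chi_k(\as)=\chi\!\left(\frac{\as\cdot k}{\|\as\|^\delta\|k\|}\right)\chi\!\left(\frac{\|k\|}{\|\as\|^\mu}\right)$ and then $\hat g_k:=\frac{1-\chi_k}{\im(\as\cdot k)}\hat f_k$, claiming that on $\mathrm{supp}(1-\chi_k)$ the divisor satisfies $|\as\cdot k|\ge\|\as\|^\delta\|k\|$. That is false: since $\chi_k$ is a product, $1-\chi_k$ is nonzero wherever \emph{either} factor is $<1$, in particular on the set $\{\as:\|k\|>2\|\as\|^\mu\}$. On that set $\as\cdot k$ can vanish, so $\hat g_k$ has a genuine singularity on the hyperplane $\{\as\cdot k=0\}$ inside the ball $\|\as\|<(\|k\|/2)^{1/\mu}$. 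Your proposed fix — excising a fixed ball in $\as$ — does not repair this, because the offending region grows with $\|k\|$: for each $k$ it is a ball of radius $(\|k\|/2)^{1/\mu}$, unbounded as $\|k\|\to\infty$. So $g$ as you define it is not a symbol, and in fact not even a well-defined function.

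The paper avoids precisely this pitfall by keeping the two cutoffs separate. It writes $f=f^{(nr)}+f^{(res)}+f^{(S)}$ with $f^{(S)}:=\sum_{k\neq 0}(1-\tilde\chi_k)\hat f_k e^{\im k\fhi}$ (supported where $\|k\|\gtrsim\|\as\|^\mu$), shows $f^{(S)}\in\sy{-\infty}$ directly by exploiting the arbitrary $k$-decay of $\hat f_k$ — \emph{without} ever dividing by $\as\cdot k$ — and only then defines $g$ through $d_k(\as)=\frac{1}{\as\cdot k}\bigl(1-\chi(\as\cdot k/(\|\as\|^\delta\|k\|))\bigr)$, whose numerator kills the neighbourhood of $\{\as\cdot k=0\}$, making $d_k$ a bona fide symbol in $S^{-\delta}_\delta$. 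To repair your argument you would have to do the same: replace the numerator $1-\chi_k$ in $\hat g_k$ by $1-\chi\!\left(\frac{\as\cdot k}{\|\as\|^\delta\|k\|}\right)$ alone, and absorb the piece $\sum_{k}\chi\!\left(\frac{\as\cdot k}{\|\as\|^\delta\|k\|}\right)\left(1-\chi\!\left(\frac{\|k\|}{\|\as\|^\mu}\right)\right)\hat f_k e^{\im k\fhi}$ into the $\sy{-\infty}$ error by the $k$-decay argument. (There are also sign slips in your computation of $\{h_0;g\}$ — you write $+\im(\as\cdot k)\hat g_k$ where it should be $-\im(\as\cdot k)\hat g_k$ — but these are typographical and the final cancellation is correct once fixed consistently.)
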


First, following \cite{BLM19} we split $f$ in a resonant, a nonresonant and a smoothing
part. This will be done with the help of suitable cutoffs, so
{let $\chi\in C^{\infty}(\R,\R)$ be a symmetric cutoff function which
  {is equal to 1 in $[-\frac{1}{2},\frac{1}{2}]$ and has support in
    $[-1,1]$.}
With its help we define,
	\begin{align}
	\label{cut-off-piccoli-divisori.1}
	\tilde\chi_k( \as) &:= \chi
	\left(\frac{\|k\|}{\|\as\|^{\mu}}\right)\ ,\qquad
	\chi_k( \as) := \chi\left({\frac{\as \cdot
				k}{\| \as\|^{ \delta}\|k\|}}  \right)\ ,
	\\
	\label{cut-off-piccoli-divisori.4}
	d_k( \as) &:= \frac{1}{\as \cdot
          k}\left(1-\chi\left({\frac{a \cdot
          		k}{\| \as\|^{ \delta}\|k\|}}\right)\right)\,.
	\end{align}
} By a simple computation one verifies that such functions are
symbols, precisely (for more details see Lemma 5.4 of \cite{BLM19}) $
\chi_k,\ \tilde\chi_k \in S_\delta^0 $, and $ d_k \in
S_\delta^{-\delta} $.  We use the above cutoffs
to decompose any function $f \in \sy m$:
\begin{equation}\label{splitting op}
f = f^{(nr)} + f^{(res)} + f^{(S)}\,.
\end{equation}
with
\begin{align}
\label{split.2}
& f^{(nr)}(\as,\fhi):= \sum_{k \in Z^d \setminus \{ 0 \}}(1-\chi_k(\as )) \tilde\chi_k(\as )
\hat{f}_k(\as)e^{ik\fhi}\ ,
\\
\label{split.1}
&	f^{(res)}(\as,\fhi):= \sum_{k \in Z^d }\chi_k(\as )
\tilde\chi_k(\as )\hat{f}_k(\as)e^{ik\fhi} \ ,
\\
\label{split.3}
& f^{(S)} (\as,\fhi):= \sum_{k \in Z^d \setminus \{ 0
	\}}(1-\tilde\chi_k(\as ))\hat{f}_k(a)e^{ik\fhi}\ .
\end{align}
Furthermore, one has that,
if $f \in \sy m$, then $f^{(res)}, f^{(nr)} \in \sy
        m$ and $f^{(res)}$ is in normal form.
Concerning $f^{(S)}$, by a variant of Lemma 5.6 of \cite{BLM19} one
has the following Lemma, whose proof we recall for the sake of
completeness.

\begin{lemma}
	\label{smoothing.l}
	Assume $f\in\sy m$, then $f^{(S)}\in \sy{-\infty}$.
\end{lemma}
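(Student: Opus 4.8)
The plan is to show that $f^{(S)}$, as defined in \eqref{split.3}, satisfies the symbol estimates \eqref{psM.1} for \emph{every} $m$, i.e. that $\partial_\fhi^\alpha\partial_\as^\beta f^{(S)}$ decays faster than any negative power of $\langle\as\rangle$. The key point is that $f^{(S)}$ only collects the Fourier modes $k$ for which $1-\tilde\chi_k(\as)\neq 0$, and by the support properties of $\chi$ this forces $\|k\|\geq\tfrac12\|\as\|^\mu$. Thus on the support of the $k$-th term we have a lower bound $\|k\|\gtrsim\langle\as\rangle^\mu$, and since $f\in\sy m$ controls $\hat f_k(\as)$ together with polynomial-in-$k$ decay (via the seminorms $\wp^m_{\delta,N_1,N_2}$ of Remark \ref{semin vere}), we can trade powers of $\|k\|$ for powers of $\langle\as\rangle^{-\mu}$, gaining as much decay as we like by choosing $N_2$ large.

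Concretely, I would work with the Fourier-side seminorms. First I would note that differentiating $f^{(S)}$ in $\fhi$ produces factors of $k^\alpha$, which are harmless: $|k^\alpha|\leq\|k\|^{|\alpha|}$ and these will be absorbed into the large power of $\|k\|$ available. Differentiating in $\as$ hits both $\hat f_k(\as)$ and the cutoff $1-\tilde\chi_k(\as)$; by Leibniz it suffices to bound each. For $\hat f_k(\as)$, the assumption $f\in\sy m$ gives, via \eqref{def seminorme}, that $|\partial_\as^{\beta'}\hat f_k(\as)|\lesssim\langle\as\rangle^{m-\delta|\beta'|}\|k\|^{-N_2}$ for every $N_2$. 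For the cutoff, one checks that $1-\tilde\chi_k\in S^0_\delta$ (this was already recorded in the excerpt, citing Lemma 5.4 of \cite{BLM19}), so its $\as$-derivatives of order $\beta''$ are bounded by $\langle\as\rangle^{-\delta|\beta''|}$. Multiplying, the $(\fhi,\as)$-derivative of order $(\alpha,\beta)$ of the $k$-th summand is bounded by
\begin{equation*}
  C_{\alpha,\beta,N_2}\,\|k\|^{|\alpha|-N_2}\,\langle\as\rangle^{m-\delta|\beta|}\,\uno_{\{\|k\|\geq\frac12\|\as\|^\mu\}}\ .
\end{equation*}

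Now I would exploit the support restriction. On the support of the indicator, $\|k\|\geq\tfrac12\|\as\|^\mu$, hence for any prescribed $M>0$ and with $N_2$ chosen large enough (depending on $M$, $|\alpha|$, $m$, $\mu$) we may write $\|k\|^{|\alpha|-N_2}\leq\|k\|^{-(M+m)/\mu}\cdot\|k\|^{|\alpha|-N_2+(M+m)/\mu}$, bound the first factor by $(\tfrac12\|\as\|^\mu)^{-(M+m)/\mu}\lesssim\langle\as\rangle^{-(M+m)}$, and keep the second factor summable over $k\in\Z^d$ (taking $N_2$ even larger so that $|\alpha|-N_2+(M+m)/\mu\leq -(d+1)$). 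Summing the resulting geometric-type series in $k$ then yields
\begin{equation*}
  \left|\partial_\fhi^\alpha\partial_\as^\beta f^{(S)}(\fhi,\as)\right|\lesssim\langle\as\rangle^{-(M+m)}\,\langle\as\rangle^{m-\delta|\beta|}\leq\langle\as\rangle^{-M-\delta|\beta|}\ ,
\end{equation*}
which is exactly the statement $f^{(S)}\in\sy{-M}$; since $M$ is arbitrary, $f^{(S)}\in\sy{-\infty}$. The only mildly delicate bookkeeping is to make the choice of $N_2$ uniform enough that a single $N_2=N_2(M,\alpha,m,\mu,d)$ simultaneously extracts the decay $\langle\as\rangle^{-(M+m)}$ and leaves a $k$-series that converges; this is the main (though routine) obstacle, and it is handled by splitting the exponent of $\|k\|$ as above. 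One should also check that the term-by-term differentiation of the series is legitimate, which follows from the same estimates giving locally uniform convergence of all the differentiated series.
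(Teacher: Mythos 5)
Your proof is correct and rests on the same key observation as the paper's: on the support of $1-\tilde\chi_k$ one has $\|k\|\gtrsim\|\as\|^\mu$, so the arbitrary $\|k\|$-decay of $\hat f_k$ coming from $f\in\sy m$ (via the seminorms in Remark~\ref{semin vere}) can be traded for arbitrary decay in $\langle\as\rangle$. The only difference is presentational: you sum the Fourier series to verify the pointwise estimates of Definition~\ref{psM}, whereas the paper works directly on the Fourier-side seminorms $\wp^m_{\delta,N_1,N_2}$; also, your bound $(\tfrac12\|\as\|^\mu)^{-(M+m)/\mu}\lesssim\langle\as\rangle^{-(M+m)}$ should be restricted to $\|\as\|\geq 1$ with the compact region $\|\as\|\leq 1$ handled trivially.
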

\proof Consider the 
$k$-th Fourier coefficient of $f^{(S)}$: keeping into account that
$(1-\tilde\chi_k(\as ))$ is supported in the region $
\left\|k\right\|\geq\left\| \as \right\|^\mu$,
one has 
\begin{align*} 
\left|(1-\tilde\chi_k(\as ))\hat f_k(\as)\right|
\leq \frac{\left|(1-\tilde\chi_k(\as ))\hat f_k(\as)\right|}{\left\|
  k\right\|^N}\left\| k\right\|^N
\\
\sleq \frac{\left|\hat f_k(\as)\right|}{\left\|
  \as\right\|^{\mu N}}\left\| k\right\|^N \leq
\wp^m_{\delta,0,N}(f)\langle \as  \rangle^{m}\frac{1}{\langle \as 
  \rangle^{N\mu}}\ ,
\end{align*}
which, provided $N$ is large enough decreases at infinity as much as
desired. The control of the other seminorms is done similarly and is omitted.
\qed

\noindent
{\it Proof of Lemma \ref{lem.hom}}.  Define
\begin{equation}
\label{G0}
g:=i\sum_{k\not=0}{d_k(\as )}\hat f_k(a)\ ,
\end{equation}
then it is immediate to verify that
$$
\left\{h_0;g\right\}\equiv
-\as\cdot\frac{\partial}{\partial\fhi}g=f^{(nr)}\ .
$$
\qed

\subsection{End of the proof of Theorem \ref{norm.form}}\label{finedim1}

In this subsection we prove the following iterative lemma from which Theorem
\ref{norm.form} immediately follows.

\begin{lemma}\label{norm.form.lemma} Fix $M$,
	let $H$ be as in equation \eqref{hami}.
	There exists $0<\delta_* <\mom$ and $\mu_*>0$ such that, if
        $\delta_*<\delta<\mom$, and $0<\mu<\mu_*$, define
	\begin{equation}\label{sigma.1}
	\da:=\min\left\{2\delta-\tb;\delta
	\right\}\ ;
	\end{equation} 
        then $\da>0$ and the following holds.  For any 
        $\forall n\in\N$ with $M\geq n \geq 0$ there exists a time dependent
    canonical transformations $\cT_n$ conjugating $H$ to
	\begin{align}
	\label{eq.in.forma}
	H_n=h_0+Z_n+  R_n+\widetilde R_n\,,
	\end{align}
	where $Z_n \in \sy \tb$ is  in normal form;
 $R_n \in \sy {\tb- n\da}$, $\widetilde R_n \in \cR^N$. Furthermore, denoting as before $(\as,\fhi)=\cT_n(\as ',\fhi',t)$,
                  one has $\left\|\as-\as'\right\|\leq C_n
                  \left\|\as\right\|^{\tb-\delta }$.  
\end{lemma}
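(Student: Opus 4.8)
The plan is to prove Lemma~\ref{norm.form.lemma} by induction on $n$, the base case $n=0$ being trivial (take $\cT_0=\mathrm{id}$, $Z_0=0$, $R_0=P\in\sy\tb$, $\widetilde R_0=0$). So assume we have reached the normal form \eqref{eq.in.forma} at step $n<M$, with the remainder $R_n\in\sy{\tb-n\da}$ splitting off from the already normalised part $Z_n$. We want to kill the non-normal-form part of $R_n$ by one further Lie transform.

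\medskip

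\textbf{Step 1: choosing the generating function.} Apply Lemma~\ref{lem.hom} to $f:=R_n\in\sy{\tb-n\da}=:\sy{m_n}$ (which requires $\frac23<\delta<1$; this fixes the constraint $\delta_*\geq\frac23$). We get $g_{n+1}\in\sy{m_n-\delta}$ and $Z_{n+1}'\in\sy{m_n}$ in normal form with $\{h_0;g_{n+1}\}+R_n-Z_{n+1}'\in\sy{-\infty}\subset\cR^N$. Here we should double check that $\eta:=m_n-\delta<\delta$, i.e.\ $\tb-n\da-\delta<\delta$, which holds for $n\geq0$ since $\tb<2<2\delta+\delta$ once $\delta$ close enough to $1$; this is exactly the hypothesis $\eta<\delta$ needed to invoke Corollary~\ref{time dep}. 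Let $\cT_{n+1}:=\cT_n\circ\Phi_{g_{n+1}}$.

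\medskip

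\textbf{Step 2: tracking the transformed Hamiltonian.} Conjugating $H_n=h_0+Z_n+R_n+\widetilde R_n$ by the time dependent Lie transform $\Phi_{g_{n+1}}$ and using Corollary~\ref{time dep} term by term:
\begin{itemize}
\item $h_0\circ\Phi_{g_{n+1}}=h_0+\{h_0;g_{n+1}\}+C^\infty(\R;\sy{1-2(\delta-\eta)})+C^\infty(\R;\sy\eta)+C^\infty(\R;\cR^N)$, and $\{h_0;g_{n+1}\}=Z_{n+1}'-R_n+(\text{a }\sy{-\infty}\text{ term})$ by Step 1;
\item $Z_n\circ\Phi_{g_{n+1}}=Z_n+\{Z_n;g_{n+1}\}+\dots$; since $Z_n\in\sy\tb$ and $g_{n+1}\in\sy{m_n-\delta}$, the bracket lies in $\sy{\tb+m_n-2\delta}$, which is of order $\leq \tb-(n+1)\da$ precisely because $\da\leq\delta$ and $m_n=\tb-n\da\leq\tb$ forces $\tb+m_n-2\delta\leq \tb-\da-n\da$... one has to chase the arithmetic here but it comes out right by the definition \eqref{sigma.1};
\item $R_n\circ\Phi_{g_{n+1}}=R_n+\{R_n;g_{n+1}\}+\dots$ with $\{R_n;g_{n+1}\}\in\sy{2m_n-2\delta}$, again of order $\leq m_n-\da$;
\item $\widetilde R_n\circ\Phi_{g_{n+1}}$ stays in $\cR^N$ (a remainder composed with a close-to-identity symbol map remains a remainder of the same order — this uses the deformation estimate \eqref{defo} and is the content of the ``$\cR^N\to\cR^N$'' stability, which should be stated or is immediate from Definition~\ref{resti});
\item the correction $\Psi_{g_{n+1}}$ from \eqref{psi}, by the remark preceding Corollary~\ref{time dep}, lies in $C^\infty_b(\R;\sy\eta)$ up to $\cR^N$, hence is absorbed.
\end{itemize}
After the cancellation $\{h_0;g_{n+1}\}+R_n=Z_{n+1}'+\sy{-\infty}$, collect: the new normal form is $Z_{n+1}:=Z_n+Z_{n+1}'\in\sy\tb$ (still in normal form, being a sum of normal forms), and everything else is gathered into $R_{n+1}$ (the part living in some $\sy{m}$ with $m\leq\tb-(n+1)\da$) plus $\widetilde R_{n+1}\in\cR^N$. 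The key numerical point to verify is that \emph{all} the genuinely-symbolic leftover terms — the $\sy{1-2(\delta-\eta)}$ and $\sy\eta$ terms from $h_0\circ\Phi$, the brackets above — have order at most $\tb-(n+1)\da$; this is precisely why $\da$ is defined as $\min\{2\delta-\tb;\delta\}$ in \eqref{sigma.1}, and where the lower bound $\delta>\delta_*$ (with $\delta_*$ chosen so that $2\delta-\tb>0$, i.e.\ $\delta_*>\tb/2$, consistent with $\tb<\td=2$) enters to guarantee $\da>0$. The smallness of $\mu$ is needed only through Lemma~\ref{smoothing.l} to make the $f^{(S)}$-type terms genuinely in $\sy{-\infty}$.

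\medskip

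\textbf{Step 3: the deformation estimate.} The bound $\|\as-\as'\|\leq C_n\|\as\|^{\tb-\delta}$ for $\cT_n$ follows by composing the estimates \eqref{defo} for each $\Phi_{g_j}$: since $g_{j}\in\sy{\tb-(j-1)\da-\delta}$, each single step moves the action by $O(\|\as\|^{\tb-(j-1)\da-\delta})$, and $\tb-\delta$ is the largest such exponent (occurring at $j=1$); a triangle-inequality/bootstrap argument on the finitely many steps $j=1,\dots,n$ (using $\|\as\|\simeq\|\as'\|$ along the flow, which itself follows from the deformation being lower order) gives the claimed bound with a constant $C_n$ depending on $n$.

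\medskip

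\textbf{Main obstacle.} The routine but delicate part is the bookkeeping in Step 2: making sure that when Corollary~\ref{time dep} is applied to $h_0$ — which is \emph{not} a bounded symbol, $h_0\in\sy2$ — the error terms $\sy{1-2(\delta-\eta)}$ and $\sy\eta$ really are of order $\leq\tb-(n+1)\da$ and not of order $2$ or worse. This works because the only way $h_0$ produces a non-small term is via $\{h_0;g_{n+1}\}$, which we have arranged to cancel against $R_n$; the remaining Taylor terms each carry at least two brackets, i.e.\ a gain of $2(\delta-\eta)=2(2\delta-m_n)\geq 2\da$, bringing them below $\sy{m_n}$. Verifying this cleanly — in particular that iterating does not degrade the gain and that the constants $\delta_*,\mu_*$ can be chosen uniformly in $n\leq M$ — is the crux, but it is exactly parallel to the analytic part of \cite{BLM22a,BLM22b,QN} and poses no conceptual difficulty.
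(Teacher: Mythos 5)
Your proposal is correct and follows essentially the same route as the paper: induction on $n$, apply Lemma \ref{lem.hom} to $R_n$ to produce $g_{n+1}$, conjugate by the time-dependent Lie transform $\Phi_{g_{n+1}}$ using Corollary \ref{time dep}, set $Z_{n+1}=Z_n+R_n^{(res)}$, collect the leftover symbol terms into $R_{n+1}$ and the $\cR^N$-terms into $\widetilde R_{n+1}$, and compose the deformation estimates \eqref{defo}. One small caution on the bookkeeping you flag as the ``main obstacle'': when handling the Taylor tail of $h_0\circ\Phi_{g_{n+1}}$ it is not enough to count the gain $2(\delta-\eta)$ starting from $h_0\in\sy{2}$ (that would give $\sy{2-4\delta+2m_n}$, which is not $\leq m_n-\da$); you must, as you do hint, first use the cohomological equation to place $\{h_0;g_{n+1}\}\in\sy{m_n}$ and only then gain $\delta-\eta=2\delta-m_n\geq\da$ per additional bracket, yielding $\sy{2m_n-2\delta}\subset\sy{m_n-\da}$ — this is exactly the term $e_3$ in the paper's list \eqref{passaggetti}.
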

\proof
We {prove the theorem by induction.} In the case $n=0$, the claim is
trivially true.  

\def\rar{S_\delta}

We consider now the case $n>0$. 
Denote $m:=\tb-n\da$; we determine $g_{n+1}\in\rar^{\eta}$,
$\eta=\tb-n\da-\delta<1$, according to Lemma \ref{lem.hom} with $f$
replaced by $R_n$. Then one uses $\Phi_{g_{n+1}}$ to conjugate $H_n$ to
$H_n'$ given by
\begin{equation}\label{passaggetti}
\begin{aligned}
H_n'&=H-\left\{H_n;g_{n+1}\right\}+\rar^{m+2(\eta-\delta)}+\rar^{\eta}+\widetilde{\widetilde
{R}}_{n+1}+\tilde R_n\circ\Phi_{g_{n+1}}
\\
&=
h_0+Z_n+R_n-\left\{h_0;g_{n+1}\right\}+\rar^{\tb+\eta-\delta}+
\rar^{m+2(\eta-\delta)}+\rar^{\eta} +\widetilde{\widetilde
{R}}_{n+1}+\tilde R_n\circ\Phi_{g_{n+1}}
\\
&=   h_0+Z_n+R_n^{(res)}+\rar^{m-(2\delta+\delta-\td)}+\rar^{\tb+\eta-\delta}+
\rar^{m+2(\eta-\delta)}+\rar^{\eta}
\\
&\null\qquad+\widetilde{\widetilde
{R}}_{n+1}+\tilde R_n\circ\Phi_{g_{n+1}} \ ,
\end{aligned}
\end{equation}
where the term $\widetilde{\widetilde
{R}}_{n+1}$ contains the remainder of the expansion of the Lie transforms of
the different functions. 
Define now $Z_{n+1}:=Z_n+R_n^{(res)}$, $\tilde R_{n+1}:=\widetilde{\widetilde
{R}}_{n+1}+\tilde R_n\circ\Phi_{g_{n+1}}$ and $R_{n+1}$ to be the sum of the
remaining terms.  Writing explicitly the
different exponents of the classes $S$ of the terms composing $R_{n+1}$, we get that they are given by
\begin{align*}
e_1:=\tb-n\da-(3\delta-\td)=\tb-n\da-\da_1\ ,\quad
\da_1:= 3\delta-\td
\\
e_2:=\tb+\tb-n\da-2\delta=\tb-n\da-\da_2 \ ,\quad
\da_2:=2\delta-\tb\ ,
\\
e_3:=\tb-n\da+2(\tb-n\da-2\delta)=\tb-n\da-\da_3 \ ,\quad
\da_3:=2(n\da+2\delta-\tb)
\\
e_4:=\tb-n\da-\delta=\tb-n\da-\da_4\ ,\quad
\da_4:=\delta\ .
\end{align*}
Remarking that $\da_3\geq\da_2$ and taking the smallest
$\da$ one immediately gets the thesis. \qed

To conclude the proof of Theorem \ref{norm.form} just take $M=[N/\da]+1$.

\section{Geometric Part}\label{geometric}

\subsection{The partition}

Following Nekhoroshev, in this section we partition the action space
$\R^d$ according to the resonance relations fulfilled in each
region. We adapt the construction to our global setting. The
construction is very similar to the one developed in a quantum
context in \cite{BLM22a,QN}.
As in original Nekhoroshev's construction, the sub moduli of $\Z^d$
play a fundamental role in this construction, so, we first  recall
their definition.

\begin{definition}
  \label{modulo}
  A subgroup $M \subseteq \Z^d$ will be called a module if 
$\displaystyle{ \Z^d \cap \textrm{span}_\R M= M}$. Given a module
  $M$, we will denote $M_{\R}$ the linear subspace of $\R^d$ generated
  by $M$. Furthermore, given a vector $\as\in\R^d$ we will denote by
  $\as_M$ its orthogonal projection on $M_{\R}$.
  \end{definition}

In order to perform our construction we take positive parameters
$\delta,$  $ \mu,$  $\tC_1, \dots, \tC_d$\,, $\tD_1, \dots, \tD_d$, $\tR$ 
fulfilling \begin{equation}
\label{parameters}
\begin{gathered}
  \frac{d(d+1)}{2}\mu<1-\delta\,,
  \\
1 = \tC_1 < \tC_2 \cdots < \tC_d\,,\\
1 = \tD_1 < \tD_2 \cdots < \tD_d\, ,
\end{gathered}
\end{equation}
and define
\begin{equation}
  \label{idelta}
\delta_s:=\delta+\frac{s(s-1)}{2}\mu\left(=\delta_{s-1}+(s-1)\mu\right)\ ,
  \end{equation}
while $\tR$ will be assumed to be large enough. 

We start by the following definition

\begin{definition}[Resonant zones]\label{RZ}
	Let $M$ be a module of $\Z^d$ of dimension $s$.
	\begin{itemize}
		\item[(i)] If $s = 0$, namely $M = \{0\},$ we say that
                  $\as\in\cZ^{(0)}_{M}$ if either
                  $\left\|\as\right\|<\tR$ or
                  \begin{equation}
                    \label{riso}
\left|\as\cdot k\right|\geq \left\| k\right\|\left\|\as\right\|^\delta\ ,\qquad \forall
k\ :\ \quad \left\|k\right\|\leq \left\|\as\right\|^\mu\ .
                    \end{equation}	
$\cZ^{(0)}_{ \{0\}}$ will be called the \emph{non resonant zone}.
                \item[(ii)] If $s\geq 1$, for any set of linearly
		independent vectors $\{k_1, \dots, k_s\}$ in $M$, we
say that $\as\in\cZ_{k_1,...,k_s}$ if $\left\|\as\right\|\geq \tR$ and
$\forall j=1,...,s$ one has
\begin{equation}
  \label{small.j}
\left\|\as\cdot k_j\right\|\leq\tC_j\left\|
k\right\|\left\|\as\right\|^{\delta_j} \ \quad {\rm and}\quad
\left\|k\right\|\leq\tD_j \left\|\as\right\|^\mu\ . 
  \end{equation}
then we put
		\begin{equation}
		\cZ^{(s)}_{M} := \bigcup_{\begin{subarray}{c}
			k_1, \dots, k_s \\ \textrm{lin. ind. in } M
			\end{subarray} } \cZ_{k_1, \dots, k_s} \,.
		\end{equation}
	\end{itemize}
	The sets $\cZ^{(s)}_M$ are called \emph{resonant zones}.
\end{definition}

The sets $\cZ^{(s)}_{M}$ contain points $\as$ which are in
resonance with \textit{at least} $s$ linearly independent vectors in
$M$.

\begin{remark} \label{rmk.inscatolate}
	Fix $r, s \in \{1, \dots, d\}$ with $1\leq r<s$, then for any $M$ with dim $M=s$, one has
	$$
	\cZ^{(s)}_{M} \subseteq \bigcup_{ \begin{subarray}{c} M^\prime \subset M\\
		\textrm{dim} M^\prime = r 
		\end{subarray}} \cZ^{(r)}_{M^\prime}\,.
	$$
\end{remark}

Following Nekhoroshev we now define the \emph{resonant blocks}, which are composed by the points which are resonant with
the vectors in a module $M,$ but are non-resonant with the vectors
$k\not \in M$ and the \emph{extended blocks} which will turn out to be
invariant under the dynamics of $h_0+Z$.

\begin{definition}[Resonant blocks]\label{def blocchi} We first
  define, for $M=\Z^d$, the set $\cB^{(0)}_{\Z^d}:=\cZ^{(0)}_{\Z^d}$. Then, we proceed
  iteratively: for $s<d$ let $M $ be a module of dimension $s$, we define the
  resonant block
	$$
	\cB^{(s)}_M = \cZ^{(s)}_M \backslash \left( \bigcup_{\begin{subarray}{c} s^\prime > s \\ \dim M^\prime = s^\prime \end{subarray}} \cB^{(s^\prime)}_{M^\prime}\right)\,.
	$$
\end{definition}

\begin{definition}[Extended blocks and fast drift planes] \label{def blocchi estesi}
For any module $M$ of dimension $s,$ we define
$$ \widetilde{E}^{(s)}_M = \left\{ \cB^{(s)}_M + M_{\R}\right\} \cap
\cZ^{(s)}_M
$$ and the \emph{extended blocks}
$$
	E^{(s)}_{M} =  \widetilde{E}^{(s)}_M \backslash \left( \bigcup_{\begin{subarray}{c}
		s^\prime < s \\ \dim M^\prime = s^\prime
		\end{subarray}} E^{(s^\prime)}_{M^\prime}\right)\,,
	$$
	where $\displaystyle{A + B = \left\lbrace a + b\ |\  a \in A, b \in B \right \rbrace}\,.$
	Moreover, for all $p \in E^{(s)}_{M}$ we define the \emph{fast drift plain}
	$$
	\Pi^{(s)}_{M}(p) = \left\lbrace p + M_{\R} \right \rbrace \cap
        \cZ^{(s)}_{M}\,\ .
	$$
\end{definition}
\subsection{Properties of the partition}

A useful technical tool is given by the following remark:
\begin{remark}\label{rmk.a.or.b}
	If $\as, b \in \R^d$, fulfil ${\|\as\|, \|b\| \geq 1}$, and
	$$
	\| \as - b\| \leq C {\| b\|}^{\tilde \delta}\,,
	$$
        with some constants
        $C>0$ and $0< \tilde\delta < 1$, 
	then one has
	$$
	\| \as - b\| \lesssim {\| \as\|}^{\tilde\delta}\,.
	$$
\end{remark}

We start now to study the properties of the partition.

\begin{remark}
  \label{non res}
By the very definition of $\cZ^{(s)}_M$, for any $s\geq 1$, one has $\cZ^{(s)}_M\cap B_{\tR}(0)=\emptyset$.  
\end{remark}

\begin{lemma}\label{lemma.bd.1}
	Provided $\tR$ is large enough, the resonant zone
	$\cZ^{(d)}_{\Z^d}$ is empty.
\end{lemma}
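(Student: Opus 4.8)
The plan is to argue by contradiction: I will suppose that some point $\as$ lies in $\cZ^{(d)}_{\Z^d}$ and show that then $\|\as\|$ must be smaller than a fixed constant $\tR_0$ depending only on $d$ and on the parameters $\tC_1,\dots,\tC_d,\tD_1,\dots,\tD_d$; choosing $\tR>\tR_0$ then empties the zone.

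First I would unfold the definition: if $\as\in\cZ^{(d)}_{\Z^d}$, then by Definition \ref{RZ}(ii) (with $M=\Z^d$, $s=d$) there exist linearly independent $k_1,\dots,k_d\in\Z^d$ with $\|\as\|\geq\tR$ and, for every $j=1,\dots,d$,
\begin{equation*}
|\as\cdot k_j|\leq\tC_j\|k_j\|\|\as\|^{\delta_j}\,,\qquad \|k_j\|\leq\tD_j\|\as\|^{\mu}\,,
\end{equation*}
with $\delta_j=\delta+\tfrac{j(j-1)}{2}\mu$ as in \eqref{idelta}. The key point is that, the $k_j$ being $d$ linearly independent vectors of $\R^d$, they determine $\as$ from the scalar products: letting $K$ be the $d\times d$ matrix with rows $k_1,\dots,k_d$ and $v=(\as\cdot k_1,\dots,\as\cdot k_d)$, one has $K\as=v$, hence $\as=K^{-1}v$.

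Then I would bound the two factors separately. For $v$ this is immediate: since $\|k_j\|\lesssim\|\as\|^{\mu}$ and $\delta_j\leq\delta_d$, the first inequality gives $|\as\cdot k_j|\lesssim\|\as\|^{\mu+\delta_d}$ for all $j$, whence $\|v\|\lesssim\|\as\|^{\mu+\delta_d}$. For $\|K^{-1}\|$ I would invoke Cramer's rule, $K^{-1}=(\det K)^{-1}\,\mathrm{adj}(K)$, together with the crucial arithmetic fact that $\det K$ is a nonzero integer, so $|\det K|\geq1$; since each entry of $\mathrm{adj}(K)$ is, up to sign, a $(d-1)\times(d-1)$ minor of $K$, it is $\lesssim(\max_j\|k_j\|)^{d-1}\lesssim\|\as\|^{(d-1)\mu}$, and therefore $\|K^{-1}\|\lesssim\|\as\|^{(d-1)\mu}$ with a constant depending only on $d$ and $\tD_1,\dots,\tD_d$. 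Multiplying the two bounds,
\begin{equation*}
\|\as\|=\|K^{-1}v\|\leq\|K^{-1}\|\,\|v\|\lesssim\|\as\|^{\theta}\,,\qquad \theta:=d\mu+\delta_d=\delta+\tfrac{d(d+1)}{2}\mu\,,
\end{equation*}
and $\theta<1$ by the first line of \eqref{parameters}. Hence $\|\as\|^{1-\theta}$ is bounded by a constant $C_0=C_0(d,\tC_j,\tD_j)$, i.e. $\|\as\|\leq C_0^{1/(1-\theta)}=:\tR_0$, and it suffices to take $\tR>\tR_0$.

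The only step needing care is the uniformity of the estimate on $\|K^{-1}\|$ over the (a priori unboundedly many) admissible choices of $k_1,\dots,k_d$: this is exactly what Cramer's rule together with $|\det K|\geq1$ provide, a determinant that could in principle be arbitrarily small being forced to be $\geq1$ by integrality. The remaining manipulations are routine.
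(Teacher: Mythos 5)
Your proof is correct and follows essentially the same route as the paper: assume $\as\in\cZ^{(d)}_{\Z^d}$, exploit the integrality of the resonance vectors to get a lower bound on a determinant, and deduce $\|\as\|\lesssim\|\as\|^{\delta_d+d\mu}$ with $\delta_d+d\mu<1$, a contradiction for $\|\as\|\geq\tR$ large. The only difference is that the paper invokes Lemma \ref{lemma.giorgilli} (Giorgilli's Lemma 5.7, stated for general $s\leq d$ with the Gram volume in the denominator), whereas you rederive the needed special case $s=d$ directly via Cramer's rule and $|\det K|\geq 1$; the key arithmetic input and the final exponent are identical.
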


The proof requires the use of the following Lemma from \cite{GioPisa}. For
the proof we refer to \cite{GioPisa}.

\begin{lemma}\label{lemma.giorgilli} [Lemma 5.7 of \cite{GioPisa}]
	Let $s \in \{1\,, \dots\,, d\}$ and let $\{ u_1\,, \dots u_s
	\}$ be linearly independent vectors in $\R^d\,.$ Let $w \in
	\Span{\{ u_1\,, \dots u_s\}}$ be any vector. If $\alpha\,, N$
	are such that
	\[
	\begin{gathered}
	\norm{u_j} \leq N \quad \forall j = 1\,, \dots s\,,\\
	| \scala{w}{u_j} | \leq \alpha \quad \forall j = 1\,, \dots s\,,
	\end{gathered}
	\]
	then
	\[
	\norm{w} \leq \frac{s N^{s-1} \alpha }{\textrm{Vol} \{u_1\,| \cdots\,| u_s \}}\,.
	\]
\end{lemma}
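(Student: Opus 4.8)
The plan is to prove this as a direct linear-algebra estimate, essentially Cramer's rule combined with Hadamard's inequality, phrased geometrically through the dual basis of $\{u_1,\dots,u_s\}$. Set $V:=\Span\{u_1,\dots,u_s\}$ and let $G=(\langle u_j,u_k\rangle)_{j,k=1}^s$ be the Gram matrix; since the $u_j$ are independent, $G$ is symmetric positive definite, and classically $\det G=\bigl(\mathrm{Vol}\{u_1|\cdots|u_s\}\bigr)^2$. Because $w\in V$ I write $w=\sum_{k=1}^s c_k u_k$; taking the scalar product with each $u_j$ turns the hypothesis into the linear system $Gc=b$ with $b=(\langle w,u_j\rangle)_{j=1}^s$, so $c=G^{-1}b$ and hence
\[
w=\sum_{j=1}^s \langle w,u_j\rangle\, e_j^*,\qquad e_j^*:=\sum_{k=1}^s (G^{-1})_{jk}\,u_k ,
\]
where (using that $G$ is symmetric) $\{e_1^*,\dots,e_s^*\}$ is precisely the dual basis of $\{u_j\}$ inside $V$, i.e. $\langle e_j^*,u_l\rangle=\delta_{jl}$. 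From $|\langle w,u_j\rangle|\le\alpha$ and the triangle inequality I then get $\|w\|\le\alpha\sum_{j=1}^s\|e_j^*\|$, so everything reduces to bounding the norms of the dual-basis vectors.

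Next I would identify $\|e_j^*\|$ geometrically. Since $\langle e_j^*,u_k\rangle=0$ for $k\ne j$, the vector $e_j^*$ is orthogonal inside $V$ to the hyperplane $W_j:=\Span\{u_k:k\ne j\}$, hence parallel to the orthogonal component $u_j^\perp$ of $u_j$ relative to $W_j$; imposing $\langle e_j^*,u_j\rangle=1$ forces $e_j^*=u_j^\perp/\|u_j^\perp\|^2$, so that $\|e_j^*\|=1/d_j$ with $d_j:=\mathrm{dist}(u_j,W_j)=\|u_j^\perp\|$. On the other hand, the ``base times height'' factorization of parallelepiped volumes gives $\mathrm{Vol}\{u_1|\cdots|u_s\}=d_j\cdot\mathrm{Vol}_{s-1}\{u_k:k\ne j\}$, and Hadamard's inequality yields $\mathrm{Vol}_{s-1}\{u_k:k\ne j\}\le\prod_{k\ne j}\|u_k\|\le N^{s-1}$. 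Combining,
\[
\|e_j^*\|=\frac1{d_j}=\frac{\mathrm{Vol}_{s-1}\{u_k:k\ne j\}}{\mathrm{Vol}\{u_1|\cdots|u_s\}}\le\frac{N^{s-1}}{\mathrm{Vol}\{u_1|\cdots|u_s\}} ,
\]
and summing the $s$ identical bounds over $j$ in the estimate $\|w\|\le\alpha\sum_j\|e_j^*\|$ gives exactly $\|w\|\le s\,N^{s-1}\alpha\,/\,\mathrm{Vol}\{u_1|\cdots|u_s\}$.

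Since this is a classical fact, there is no genuine obstacle; the only points deserving care are the verification that the $e_j^*$ really are the dual basis (which uses $G=G^{\top}$) and the volume factorization $\mathrm{Vol}\{u_1|\cdots|u_s\}=d_j\,\mathrm{Vol}_{s-1}\{u_k:k\ne j\}$, both elementary. An equivalent route, if one prefers to avoid introducing $e_j^*$, is to bound each coefficient directly by Cramer's rule, $|c_j|=|\det G_j|/\det G$ with $G_j$ obtained from $G$ by replacing its $j$-th column with $b$, expand $\det G_j$ along that column, and apply Hadamard's inequality to the columns of $G$ and the bound $|b_k|\le\alpha$; using $\|w\|^2=\sum_j c_j\langle w,u_j\rangle\le\alpha\sum_j|c_j|$ one reaches the same estimate.
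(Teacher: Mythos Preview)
Your argument is correct: expressing $w$ in the dual basis $\{e_j^*\}$, identifying $\|e_j^*\|=1/d_j$ with $d_j$ the distance from $u_j$ to $\Span\{u_k:k\neq j\}$, and then combining the base-times-height factorisation of the volume with Hadamard's inequality gives exactly the stated bound. The alternative Cramer-rule formulation you sketch at the end is equivalent.

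There is nothing to compare against in the present paper: the lemma is quoted from \cite{GioPisa} and the authors explicitly write ``For the proof we refer to \cite{GioPisa}'' without reproducing any argument. Your proof is the standard one (and is essentially the argument in that reference), so it can stand as a self-contained justification here.
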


\begin{proof}[Proof of Lemma \ref{lemma.bd.1}]
	Assume that $\cZ^{(d)}_{\Z^d}$ is not empty and take $\as \in
	\cZ^{(d)}_{\Z^d}$, then there exist $\{{k}_1, \dots,
	   {k}_{d}\} \subset \Z^d$ linear independent vectors such
           that \eqref{small.j} is fulfilled by $\as$ with the given
           $k_j$'s. 
Using Lemma \ref{lemma.giorgilli} we deduce
	$$
	\| \as\| \leq d (\tD_d)^{d} \tC_d {\|\as\|}^{\delta_d  + \mu d}\,.
	$$
By \eqref{parameters}, one has that $\delta_d+d\mu <
1$. So, provided $\tR$ is large enough this is in contradiction  $\left\|\as\right\|<\tR$.
\end{proof}

\begin{lemma}\label{lemma.diametri}	
There exists a constant $C$ s.t. if $\Pi^{(s)}_M(\as)$ is a fast drift
plane, then 
\begin{equation}
  \label{diam}
diam(\Pi^{(s)}_M(\as)) \leq C {\|\as\|}^{\delta_{s+1}}\,.
  \end{equation}
\end{lemma}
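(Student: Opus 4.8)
The plan is to exploit that the fast drift plane $\Pi^{(s)}_M(\as)$ extends only in the $s$-dimensional direction $M_{\R}$, and that the resonance conditions defining $\cZ^{(s)}_M$ pin every point of the plane into a bounded box inside $M_{\R}$; a short bootstrap then converts the resulting $\|q\|$-dependent bound into a $\|\as\|$-dependent one. If $s=0$ the plane is the single point $\{\as\}$ and there is nothing to prove, so I would assume $s\geq1$.

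First I would take an arbitrary $q\in\Pi^{(s)}_M(\as)\subseteq\cZ^{(s)}_M$ and, using Definition \ref{RZ}, pick linearly independent $k_1,\dots,k_s\in M$ with $q\in\cZ_{k_1,\dots,k_s}$; since $\dim M=s$ these span $M_{\R}$, hence the orthogonal projection $q_M$ lies in $\Span\{k_1,\dots,k_s\}$. Because $k_j\in M_{\R}$ one has $q\cdot k_j=q_M\cdot k_j$, so \eqref{small.j} (together with $\tC_j\leq\tC_d$, $\tD_j\leq\tD_d$ and $\delta_j\leq\delta_s$ for $j\leq s$) gives $\|k_j\|\leq\tD_d\|q\|^{\mu}$ and $|q_M\cdot k_j|\leq\tC_d\tD_d\|q\|^{\mu+\delta_s}$. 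Applying Lemma \ref{lemma.giorgilli} with $w=q_M$, $u_j=k_j$, and using $\mathrm{Vol}\{k_1|\cdots|k_s\}\geq1$ (the Gram determinant of linearly independent integer vectors is a positive integer), one gets
\[
\|q_M\|\ \leq\ s\,\tC_d\tD_d^{\,s}\,\|q\|^{(s-1)\mu+\mu+\delta_s}\ =\ s\,\tC_d\tD_d^{\,s}\,\|q\|^{\delta_{s+1}}\,,
\]
the last equality by \eqref{idelta}.

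Next I would decouple $\|q\|$ from $\|\as\|$: as $q-\as\in M_{\R}$, the components of $q$ and $\as$ orthogonal to $M_{\R}$ agree, so $\|q\|\leq\|\as\|+\|q_M\|\leq\|\as\|+s\tC_d\tD_d^{\,s}\|q\|^{\delta_{s+1}}$. Since $\delta_{s+1}<1$ by \eqref{parameters} and $\|q\|\geq\tR$ by Remark \ref{non res}, taking $\tR$ large (depending only on $d,\tC_d,\tD_d$) forces $s\tC_d\tD_d^{\,s}\|q\|^{\delta_{s+1}}\leq\frac12\|q\|$, hence $\|q\|\leq2\|\as\|$; substituting back, $\|q_M\|\lesssim\|\as\|^{\delta_{s+1}}$ uniformly in $q\in\Pi^{(s)}_M(\as)$. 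To close, for $q_1,q_2$ in the plane one has $q_1-q_2\in M_{\R}$, so $q_1-q_2=q_{1,M}-q_{2,M}$ and $\|q_1-q_2\|\leq\|q_{1,M}\|+\|q_{2,M}\|\lesssim\|\as\|^{\delta_{s+1}}$, which is \eqref{diam}.

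The one delicate point is exactly this circularity: Lemma \ref{lemma.giorgilli} naturally produces a bound in terms of $\|q\|$, not $\|\as\|$, and resolving it needs both the a priori lower bound $\|q\|\geq\tR$ (available since $\cZ^{(s)}_M$ misses $B_{\tR}(0)$) and the sublinearity $\delta_{s+1}<1$ coming from the parameter constraint \eqref{parameters} — the same mechanism as in Remark \ref{rmk.a.or.b}. The rest is routine bookkeeping of the constants $\tC_j,\tD_j$ and the exponents $\delta_j$.
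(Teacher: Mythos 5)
Your proof is correct and follows essentially the same route as the paper: project onto $M_{\R}$, use the resonance conditions \eqref{small.j} together with Lemma \ref{lemma.giorgilli} to bound $\|q_M\|\lesssim\|q\|^{\delta_{s+1}}$, and note that any two points of the fast drift plane differ only inside $M_{\R}$. The paper closes by invoking Remark \ref{rmk.a.or.b} to trade $\|\as'\|^{\delta_{s+1}}$ for $\|\as\|^{\delta_{s+1}}$, whereas you perform an explicit bootstrap $\|q\|\leq 2\|\as\|$ — the same mechanism written out in full.
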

\proof First, by definition of resonant zones, for
$a\in\cZ^{(s)}_M$, there exist $k_1,...,k_s\in M$ s.t.  $|\as\cdot k_j|\leq
\tC_s \left\| k_j \right\| {\|\as\|}^{\delta_s}$, $\forall
j=1,...,s$, so that, by Lemma \ref{lemma.giorgilli}
	\begin{equation}
	\label{proi.1}
	\|\Pi_M \as\| \sleq  {\|\as\|}^{\delta_s+s\mu}\,.
	\end{equation}
       If $\as'\in \Pi^{(s)}_M(\as)$ then the same holds for
       $\as'$. So we have
        $$
\left\|\as-\as'\right\|=\left\|\Pi_M(\as-\as')\right\|\leq
\left\|\Pi_M\as\right\|+\left\|\Pi_M\as'\right\|\sleq
(\left\|\as\right\|^{\delta_{s+1}}+\left\|\as'\right\|^{\delta_{s+1}})\ .
        $$
By Remark \ref{rmk.a.or.b} this implies the thesis. \qed

In particular we have the following Corollary
\begin{corollary}
  \label{distB}
  If $\as\in E^{(s)}_M$ there exists $\as'\in \cB^{(s)}_M$ s.t.
  \begin{equation}
    \label{distbe}
\left\|\as-\as'\right\|\leq C\left\|\as\right\|^{\delta_{s+1}}\ .
  \end{equation}
  \end{corollary}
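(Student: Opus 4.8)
The plan is to take $\as'$ to be precisely the point of $\cB^{(s)}_M$ that witnesses membership of $\as$ in the extended block, and then to bound $\|\as-\as'\|$ by the diameter of the fast drift plane through $\as$, which has already been estimated in Lemma \ref{lemma.diametri}.

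First I would unwind the definitions. Since $\as\in E^{(s)}_M\subseteq\widetilde{E}^{(s)}_M=\{\cB^{(s)}_M+M_\R\}\cap\cZ^{(s)}_M$, we may write $\as=\as'+v$ with $\as'\in\cB^{(s)}_M$ and $v\in M_\R$; in particular $\as'=\as-v\in\as+M_\R$. Moreover $\cB^{(s)}_M\subseteq\cZ^{(s)}_M$ by Definition \ref{def blocchi}, hence $\as'\in\{\as+M_\R\}\cap\cZ^{(s)}_M$. On the other hand $\as\in\widetilde{E}^{(s)}_M\subseteq\cZ^{(s)}_M$, so also $\as\in\{\as+M_\R\}\cap\cZ^{(s)}_M$. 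Since $\as\in E^{(s)}_M$, the set $\Pi^{(s)}_M(\as)=\{\as+M_\R\}\cap\cZ^{(s)}_M$ is a fast drift plane, and by the above it contains both $\as$ and $\as'$.

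Then Lemma \ref{lemma.diametri} applied to $\Pi^{(s)}_M(\as)$ gives directly
\[
\|\as-\as'\|\leq diam(\Pi^{(s)}_M(\as))\leq C\|\as\|^{\delta_{s+1}}\,,
\]
which is exactly \eqref{distbe}. I do not expect any genuine obstacle here: all the real content is packaged in Lemma \ref{lemma.diametri}, and the only point requiring care is the purely set-theoretic bookkeeping — that the witness point coming from the definition of $\widetilde{E}^{(s)}_M$ actually lies in the fast drift plane $\Pi^{(s)}_M(\as)$, which is where one uses $\cB^{(s)}_M\subseteq\cZ^{(s)}_M$ — together with the harmless degenerate case $s=0$, for which $M_\R=\{0\}$, $\Pi^{(s)}_M(\as)=\{\as\}$, and one simply takes $\as'=\as$.
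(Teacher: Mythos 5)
Your proof is correct and follows the same route as the paper's one-line argument: take the witness $\as'\in\cB^{(s)}_M$ coming from the definition of $\widetilde E^{(s)}_M$, observe that both $\as$ and $\as'$ lie on the fast drift plane $\Pi^{(s)}_M(\as)$, and invoke the diameter bound of Lemma \ref{lemma.diametri}. The only difference is that you carefully justify the set-theoretic step (why $\Pi^{(s)}_M(\as)\cap\cB^{(s)}_M\neq\emptyset$, via $\cB^{(s)}_M\subseteq\cZ^{(s)}_M$), which the paper asserts without comment.
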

Indeed, by definition of extended block $\exists
\as'\in\Pi^{(s)}_M(\as)\cap \cB^{(s)}_M$, and therefore, by \eqref{diam} the corollary
holds.  

The next lemma ensures that, if the parameters $\tC_{j}, \tD_j$ are
suitably chosen, an extended block $E^{(s)}_{M, j}$ is separated from
every resonant zone associated to a module $M^\prime$ with dim$(M)=s'\leq s$,
which is not contained in $M.$ This is the extension to our context of
the classical property of separation of resonances.

\begin{lemma} \label{separo}[Separation of
	resonances] Take $K>0$.  There exist positive constants $
  \tR$, $\tilde{\tC}_{s+1}$ and $\tilde{\tD}_{s+1}$ depending only on
  $\mu, \delta_{s}, \tC_{s}, \tD_{s}, K$ such that, if
	$$
	\tC_{s+1} > \tilde{\tC}_{s+1}\,, \quad \tD_{s+1} >\tilde{\tD}_{s+1}, \quad \tR > \bar \tR\,,
	$$
	then the following holds true. Let $\as \in E^{(s)}_{M}$ for
	some $M$ of dimension $s= 1, \dots, d-1$, and let $\as^\prime \in \R^d\null$
	be such that 
	\begin{gather*}
\left\|\as-\as'\right\| \leq K{\| \as\|}^{\delta_{s+1}}\,,
	\end{gather*}
	then $\forall M^\prime \not\subset M$ s. t.  $s':=\dim M^\prime {\leq} s$ one
	has 
$$\as^\prime \notin \cZ^{(s')}_{M^\prime}\,.
$$
\end{lemma}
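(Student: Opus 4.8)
The plan is to argue by contradiction. Fix $M$ of dimension $s$ and $\as\in E^{(s)}_M$, take $\as'$ with $\|\as-\as'\|\leq K\|\as\|^{\delta_{s+1}}$, and suppose that for some $M'\not\subset M$ with $s':=\dim M'\leq s$ one has $\as'\in\cZ^{(s')}_{M'}$. By Definition~\ref{RZ}(ii) there are linearly independent $k_1,\dots,k_{s'}\in M'$ witnessing this. Since a module is the intersection of $\Z^d$ with its real span, $M'\not\subset M$ forces at least one of these vectors, call it $k$, to lie outside $M_{\R}$ (otherwise $M'_{\R}=\Span\{k_1,\dots,k_{s'}\}\subseteq M_{\R}$ would give $M'\subseteq M$). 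For this $k$ the witnessing inequalities read $|\as'\cdot k|\leq\tC_{s'}\|k\|\|\as'\|^{\delta_{s'}}$ and $\|k\|\leq\tD_{s'}\|\as'\|^{\mu}$, where $\tC_{s'}\leq\tC_s$, $\tD_{s'}\leq\tD_s$ and $\delta_{s'}\leq\delta_s<\delta_{s+1}$.

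The next step is to replace $\as'$ by a point $\as''\in\cB^{(s)}_M$ to which the same resonance information transfers. Since $\as\in E^{(s)}_M$, Corollary~\ref{distB} provides $\as''\in\cB^{(s)}_M$ with $\|\as-\as''\|\leq C\|\as\|^{\delta_{s+1}}$; combining with the hypothesis on $\as'$ and Remark~\ref{rmk.a.or.b} (after fixing $\bar\tR$ large enough that all three points have norm $\geq 1$), one gets $\|\as'-\as''\|\sleq\|\as''\|^{\delta_{s+1}}$ and $\|\as\|\simeq\|\as'\|\simeq\|\as''\|$, with constants depending only on $C,K$ and $\delta_{s+1}$ (hence only on $C,K,\delta_s,\mu$). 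Transferring the two bounds above from $\as'$ to $\as''$ by the triangle inequality — using $\|\as'\|^{\delta_{s'}}\sleq\|\as''\|^{\delta_{s+1}}$ — then yields
\[
|\as''\cdot k|\leq\tilde{\tC}_{s+1}\|k\|\|\as''\|^{\delta_{s+1}},\qquad
\|k\|\leq\tilde{\tD}_{s+1}\|\as''\|^{\mu},
\]
for constants $\tilde{\tC}_{s+1},\tilde{\tD}_{s+1}$ depending only on $\mu,\delta_s,\tC_s,\tD_s,K$ (essentially $\tilde{\tC}_{s+1}\simeq\tC_s+1$, $\tilde{\tD}_{s+1}\simeq\tD_s$). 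This is the only place where the freedom in choosing $\tC_{s+1},\tD_{s+1},\tR$ is used, and it is used in the direction $\tC_{s+1}>\tilde{\tC}_{s+1}$, $\tD_{s+1}>\tilde{\tD}_{s+1}$, $\tR>\bar\tR$.

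Now the \emph{core} geometric step: upgrade $\as''$ to a point of a higher resonant zone. Since $\as''\in\cZ^{(s)}_M$, pick linearly independent $\kappa_1,\dots,\kappa_s\in M$ with $|\as''\cdot\kappa_j|\leq\tC_j\|\kappa_j\|\|\as''\|^{\delta_j}$ and $\|\kappa_j\|\leq\tD_j\|\as''\|^{\mu}$; as $k\notin M_{\R}=\Span\{\kappa_1,\dots,\kappa_s\}$, the $s+1$ vectors $\kappa_1,\dots,\kappa_s,k$ are linearly independent and generate a module $\widetilde M:=\Z^d\cap\Span\{\kappa_1,\dots,\kappa_s,k\}$ of dimension $s+1$. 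Taking $\kappa_j$ as the $j$-th witnessing vector for $j\leq s$ and $k$ as the $(s+1)$-th, and using $\|\as''\|\geq\bar\tR$ together with the chosen inequalities $\tC_{s+1}\geq\tilde{\tC}_{s+1}$, $\tD_{s+1}\geq\tilde{\tD}_{s+1}$, conditions \eqref{small.j} are all met, so $\as''\in\cZ^{(s+1)}_{\widetilde M}$. Finally, a short downward induction on dimension — starting from $\cZ^{(d)}_{\Z^d}=\emptyset$ (Lemma~\ref{lemma.bd.1}) and using the recursion in Definition~\ref{def blocchi} — shows that every point of $\cZ^{(r)}_{M''}$ lies in some block $\cB^{(r')}_{M'''}$ with $r'\geq r$; applied with $r=s+1$ this places $\as''$ in a block of dimension $>s$. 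But $\as''\in\cB^{(s)}_M$, which by definition is disjoint from all such blocks, a contradiction. Since $M'$ was an arbitrary module with $M'\not\subset M$ and $\dim M'\leq s$, the lemma follows.

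I expect the main obstacle to be the bookkeeping of constants in the second paragraph: one must be sure that $\tilde{\tC}_{s+1},\tilde{\tD}_{s+1},\bar\tR$ can be fixed using only the data $\mu,\delta_s,\tC_s,\tD_s,K$ — in particular that they do not involve $\tC_{s+1},\tD_{s+1}$, which would make the construction circular — and that $\bar\tR$ may be chosen first so that all the "norms are $\geq 1$ and mutually comparable" reductions via Remark~\ref{rmk.a.or.b} are legitimate. The geometric recognition of $\as''$ as a point of $\cZ^{(s+1)}_{\widetilde M}$, and then of a block of dimension $>s$, is conceptually the heart of the argument but is routine once the constants are under control.
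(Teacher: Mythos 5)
Your proof is correct and follows essentially the paper's strategy: by contradiction, pick a witnessing resonance vector $k\notin M$, pass via Corollary~\ref{distB} to a nearby block point $\as''\in\cB^{(s)}_M$, transfer the resonance inequalities to $\as''$ with relaxed constants $\tilde{\tC}_{s+1},\tilde{\tD}_{s+1}$ (using Remark~\ref{rmk.a.or.b} and $\delta_{s'}\leq\delta_s<\delta_{s+1}$), and conclude that $\as''$ is resonant with $s+1$ independent vectors, contradicting $\as''\in\cB^{(s)}_M$. The only thing you add is spelling out the downward induction on dimension (every point of $\cZ^{(r)}_{M''}$ lies in some block of dimension $\geq r$), which the paper leaves implicit when it declares the final contradiction.
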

\begin{proof}
Assume by contradiction that $\as'\in \cZ^{(s')}_{M^\prime}$ for some $M^\prime \neq M.$ It follows
	that there exist  $s'$ integer vectors, $k_1,...,k_{s'}\in M'$ {\it
		among which at least one does not belong to $M$,} s.t.
	\begin{equation}
	\label{bei.11}
	\left|\scala{\as'}{k_j}\right|\leq
	\tC_{j}{\| \as' \|}^{\delta_j}{\norm{k_j}}\ ,\quad \norm{k_j}\leq
	\tD_{j}{\| \as'\|}^{\mu}\ . 
	\end{equation}
	Let $k_{\bar\j }$ be the vector which does not belong to
        $M$. By Corollary \ref{distB}, there exists $b\in \cB^{(s)}_{M}$ s.t. $\norm{\as -b}\sleq {\|\as\|}^{\delta_{s+1}}$ and
        thus also $\norm{\as' -b}\sleq {\|\as'\|}^{\delta_{s+1}}$ (of
        course with a different constant). Thus it follows that there
        exist  constants $\tilde \tC_{s+1}$, $\tilde \tD_{s+1}$ s.t. 
$$
	\left|b\cdot k_{\bar\j }\right|\leq
	\tilde{\tC}_{s+1}{\| b\|}^{\delta_{s+1}}\norm{k_{\bar{\j}}} \ ,\quad
	\norm{k_{\bar\j}}\leq \tilde{\tD}_{s+1}{\| b\|}
	^\mu\ .
	$$
        But, if $\tC_{s+1}> \tilde{\tC}_{s+1}$, $\tD_{s+1}> \tilde{\tD}_{s+1}$, this means that $b$ is also
	resonant with $k_{\bar\j }\not\in M $, and this contradicts the fact
	that $b\in \cB^{(s)}_{M, j}$.
\end{proof}

In order to take into account the effects of the remainder in the
normal form theorem and to
conclude the proof we need to extend 
the resonant planes.
\begin{definition}
  \label{extenp}
We define
\begin{align}
    \label{boh}
    \left[\Pi^{(s)}_M(\as)\right]^{ext}:=\bigcup_{\as'\in
      \Pi^{(s)}_M(\as)}B_{\left\|\as'\right\|^{\delta_{s+1}}}(\as')\ ,
\\
    \label{boh1}
    \left[\Pi^{(s)}_M(\as)\right]^{ext}_{tr}:=    \left[\Pi^{(s)}_M(\as)\right]^{ext}\cap\cZ^{(s)}_M\ ,
\end{align}
\begin{equation}
    \end{equation}
where, ad before $B_R(\as)$ is the ball of radius $R$ and centre
$\as$.
\end{definition}

\begin{remark}
  \label{diam.rm}
  One has
  \begin{equation}
    \label{diam.ext}
diam(\left[\Pi^{(s)}_M(\as)\right]^{ext} )\leq C_s\left\|\as\right\|^{\delta_{s+1}}
\ ,    \end{equation}
  for some $C_s$.
\end{remark}

\begin{remark}
  \label{esteso.1} 
If the constants $
  \tR$, $\tilde{\tC}_{s+1}$ and $\tilde{\tD}_{s+1}$ are chosen
  suitably, then $\forall \as'\in\exte$ and all $\as''$ s.t.
  $$
\left\|\as''-\as'\right\|\leq \left\|\as'\right\|^{\delta_{s+1}}\ ,
$$
one has \begin{equation}
  \label{nonrisuono}
\as''\not\in\cZ^{s'}_{M'}\ ,\quad \forall (s'\leq
s\ ,\ M':M'\not\subseteq M)\ .
  \end{equation}
\end{remark}
\begin{remark}
  \label{esco}
By Lemma \ref{separo}, it follows that if
$\as''\in\partial\Pi^{(s)}_M(\as')$, $\as'\in\exte_{tr}$, then
$\as''\in\partial 
E^{(s')}_{M'}$ with $M\subset M'$ and $s'<s$. 
\end{remark}
\begin{lemma}
  \label{non.risuono}
If the constants $
  \tR$, $\tilde{\tC}_{s+1}$ and $\tilde{\tD}_{s+1}$ are chosen
  suitably, then, $\forall \as\in[\Pi^{(s)}_M(p')]^{ext}$ and $\forall
  k\not\in M$ one has
  \begin{equation}
   \left\|k\right\|\leq
   \left\|\as\right\|^\mu\ \Longrightarrow\ \left|\as\cdot
   k\right|\geq \left\|\as\right\|^\delta\left\|k\right\|\ .
    \end{equation}
\end{lemma}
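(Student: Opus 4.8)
My plan is to argue by contradiction, exploiting the combinatorial structure of the resonant blocks. Suppose the conclusion failed, so that $|\as\cdot k|<\|\as\|^\delta\|k\|$ for some $\as\in[\Pi^{(s)}_M(p')]^{ext}$ and some $k\notin M$ with $\|k\|\le\|\as\|^\mu$; I want to derive a contradiction. I will treat $s\ge1$, the case $s=0$ being analogous, with the module $M''$ below replaced by the one-dimensional module spanned by $k$.

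First I would push the (near) resonance down onto the block $\cB^{(s)}_M$. Since $p'$ itself lies in $\Pi^{(s)}_M(p')\subseteq[\Pi^{(s)}_M(p')]^{ext}$, Remark~\ref{diam.rm} bounds $\|\as-p'\|$ by $C\|p'\|^{\delta_{s+1}}$, and Corollary~\ref{distB} furnishes $b\in\cB^{(s)}_M$ with $\|p'-b\|\le C\|p'\|^{\delta_{s+1}}$; Remark~\ref{rmk.a.or.b} makes $\|\as\|,\|p'\|,\|b\|$ comparable once $\tR$ is large, so $\|\as-b\|\lesssim\|b\|^{\delta_{s+1}}$, and $\|b\|\ge\tR$ by Remark~\ref{non res}. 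Using $\delta=\delta_1\le\delta_{s+1}$, the triangle inequality then upgrades $|\as\cdot k|<\|\as\|^\delta\|k\|$ to $|b\cdot k|\le C''\|b\|^{\delta_{s+1}}\|k\|$, together with $\|k\|\lesssim\|b\|^\mu$; crucially the constant $C''$ depends on $d,\mu,\delta$ and $\tC_1,\dots,\tC_s,\tD_1,\dots,\tD_s$ only, not on $\tC_{s+1},\tD_{s+1}$.

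Next I would reinterpret these inequalities as a resonance of order $s+1$. Since $b\in\cZ^{(s)}_M$, it satisfies \eqref{small.j} for some linearly independent $k_1,\dots,k_s\in M$, which span $M_\R$; as $M$ is a module and $k\notin M$, the vector $k$ lies outside $M_\R$, so $k_1,\dots,k_s,k$ are linearly independent. Now I would choose the constants $\tC_{s+1},\tD_{s+1}$, still at our disposal, larger than $C''$ and the implicit constant in $\|k\|\lesssim\|b\|^\mu$ (compatibly with \eqref{parameters} and with the choices already forced by Lemma~\ref{separo}); then the displayed bounds together with \eqref{small.j} for $b$ say exactly that $b\in\cZ_{k_1,\dots,k_s,k}\subseteq\cZ^{(s+1)}_{M''}$, where $M'':=\Z^d\cap\textrm{span}_\R(M\cup\{k\})$ is a module of dimension $s+1$ strictly containing $M$. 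Finally, iterating Definition~\ref{def blocchi} — the recursion terminating because $\cZ^{(d)}_{\Z^d}=\emptyset$ by Lemma~\ref{lemma.bd.1} — the point $b$ would belong to $\cB^{(t)}_{M_t}$ for some $t\ge s+1>s$, hence to one of the sets subtracted from $\cZ^{(s)}_M$ in forming $\cB^{(s)}_M$, contradicting $b\in\cB^{(s)}_M$.

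The estimates in the second paragraph and the block bookkeeping in the third are routine. The one place requiring real care — and the only role of the hypothesis that the constants be suitably chosen — is the claim that $\tC_{s+1}$ and $\tD_{s+1}$ may be taken larger than $C''$: this is legitimate precisely because $C''$ involves only data fixed at the previous stages of the construction, which is the inductive logic already built into the definition of the resonant zones $\cZ^{(s)}_M$.
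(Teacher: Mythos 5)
Your argument is correct and coincides with the paper's proof: both argue by contradiction, transport the near-resonance from $\as$ to a nearby point $b\in\cB^{(s)}_M$ via the diameter bound and Corollary~\ref{distB}, and then observe that $b$ would satisfy the $(s+1)$-resonance condition \eqref{small.j} once $\tC_{s+1},\tD_{s+1}$ are chosen large, contradicting membership in the block. Your third paragraph merely spells out the block bookkeeping (linear independence of $k_1,\dots,k_s,k$, identification of the module $M''$, termination of the recursion via Lemma~\ref{lemma.bd.1}) that the paper compresses into its final sentence.
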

\proof Following the proof of Lemma \ref{separo}, assume by
contradiction that $\left|\as\cdot k\right|\leq
\left\|\as\right\|^{\delta}\left\|k\right\|$, then $\exists \as'\in
E^{(s)}_M$ s.t. $\left|\as-\as'\right|\sleq
\left\|\as\right\|^{\delta_{s+1}}$, and therefore $\exists \as''\in
\cB^{(s)}_M$ s.t. $\left|\as-\as''\right|\sleq
\left\|\as\right\|^{\delta_{s+1}}$. It follows
$$
\left|\as''\cdot k\right|\leq
\left|\as-\as''\right|\left\|k\right\|+\left|\as\cdot k\right|\sleq
\left|\as\right| ^{\delta_{s+1}}\left\|k\right\|\sleq
\left|\as''\right| ^{\delta_{s+1}}\left\|k\right\|\ ,
$$
but if $\tC_{s+1}$ is chosen large enough, this means that $\as''$
fulfils \eqref{small.j} with $j=s+1$, against
$\as''\in\cB^{(s)}_M$. \qed 
\begin{corollary}
  \label{noncisono}
Consider the normal form $Z_N$ obtained by Theorem
\ref{norm.form}. Define
$Z(\as,\fhi,t):=Z_N(\as,\fhi,t)(1-\chi\left(\frac{\left\|\as\right\|^2}{\tR}\right))$
(which is supported outside $B_{\tR}(0)$).
For $\as'\in\exte$,
\begin{equation}
  \label{non.co}
  Z(\as',\fhi,t)=\sum_{k\in M}Z_{k}(\as',t)e^{\im k\cdot \fhi}\ ,
\end{equation}
namely the sum is restricted to $k\in M$. 
\end{corollary}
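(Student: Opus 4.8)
The plan is to deduce this from the structure of the normal form $Z_N$ produced by Theorem \ref{norm.form} together with Lemma \ref{non.risuono}. Recall that $Z_N$ is in normal form in the sense of Definition \ref{def.nf}: writing $Z_N=\sum_{k\in\Z^d}\hat Z_k(\as,t)e^{\im k\cdot\fhi}$, every point in $\mathrm{supp}\,\hat Z_k(\cdot,t)$ is resonant with $k$ in the sense of Definition \ref{res}, i.e.\ satisfies $|\as\cdot k|\leq\|\as\|^\delta\|k\|$ and $\|k\|\leq\|\as\|^\mu$. Multiplying by the cutoff $(1-\chi(\|\as\|^2/\tR))$ only shrinks the support (it kills the part inside $B_{\tR}(0)$) and does not change the Fourier index $k$ attached to each term, so the same resonance property holds for each Fourier coefficient $Z_k$ of $Z$, and moreover $Z_k$ is supported outside $B_{\tR}(0)$.

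First I would fix $\as'\in\exte$ and a mode $k\notin M$, and argue that $Z_k(\as',t)=0$. Suppose not; then $\as'\in\mathrm{supp}\,Z_k(\cdot,t)$, so $\|\as'\|\geq\tR$ and, by the normal form property, $|\as'\cdot k|\leq\|\as'\|^\delta\|k\|$ together with $\|k\|\leq\|\as'\|^\mu$. But Lemma \ref{non.risuono} (applied with the same module $M$, same $s$, and the point $\as'\in[\Pi^{(s)}_M(p')]^{ext}$), for $\tR,\tilde\tC_{s+1},\tilde\tD_{s+1}$ chosen suitably, tells us precisely that for $k\notin M$ with $\|k\|\leq\|\as'\|^\mu$ one has $|\as'\cdot k|\geq\|\as'\|^\delta\|k\|$. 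Since $k\neq0$ and $\|\as'\|\geq\tR>1$, this is a strict contradiction with $|\as'\cdot k|\leq\|\as'\|^\delta\|k\|$ once $\tR$ is large (the case $\as'\cdot k=0$ is excluded because then $\|\as'\|^\delta\|k\|\leq|\as'\cdot k|=0$ forces $k=0$). Hence every Fourier coefficient of $Z$ with index $k\notin M$ vanishes at $\as'$, which is exactly \eqref{non.co}.

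The only delicate point is bookkeeping the choice of constants: Lemma \ref{non.risuono} requires $\tR,\tilde\tC_{s+1},\tilde\tD_{s+1}$ large, and these must be compatible with the (finitely many) choices already made in Theorem \ref{norm.form} and in the geometric construction; since all these are finitely many conditions of the form "the parameter is large enough," they can be met simultaneously, and this is what "if the constants are chosen suitably" in the statement refers to. I expect this compatibility check to be the only real content; the rest is a direct contraposition of Lemma \ref{non.risuono}. One should also note that Lemma \ref{non.risuono} is stated for $\as\in[\Pi^{(s)}_M(p')]^{ext}$ while here we need it on $\exte=[\Pi^{(s)}_M(\as)]^{ext}$ for the generic base point; these are the same type of set, so the lemma applies verbatim.
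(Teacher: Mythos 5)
Your approach is the right one and, since the paper supplies no written proof of this corollary, it is almost certainly the intended argument: $Z_N$ is built, step by step, as a sum of resonant parts $R_n^{(res)}=\sum_k \chi_k\tilde\chi_k \hat R_{n,k}\,e^{\im k\cdot\fhi}$, so each Fourier coefficient of $Z$ with index $k$ is supported in the resonance region $\{|\as\cdot k|\le \|\as\|^\delta\|k\|\,,\ \|k\|\le\|\as\|^\mu\}$, and Lemma \ref{non.risuono} rules this out for $k\notin M$ on $\exte$. This is a direct contraposition of the lemma, as you say.

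One remark: your parenthetical justification for a \emph{strict} contradiction is off target. Combining $|\as'\cdot k|\le\|\as'\|^\delta\|k\|$ from the normal form property (Definitions \ref{res}, \ref{def.nf}) with $|\as'\cdot k|\ge\|\as'\|^\delta\|k\|$ from Lemma \ref{non.risuono} gives \emph{equality}, not immediately a contradiction, and excluding $\as'\cdot k=0$ does not address the equality case. The gap is easily closed by either of two observations: (a) the proof of Lemma \ref{non.risuono} proceeds by assuming $|\as\cdot k|\le\|\as\|^\delta\|k\|$ and deriving a contradiction, so in fact the strict inequality $|\as\cdot k|>\|\as\|^\delta\|k\|$ holds for $k\notin M$; or (b) since $\chi$ is a smooth cutoff with support in $[-1,1]$, one has $\chi\equiv 0$ on $\{|x|\ge 1\}$, hence the lemma's conclusion $|\as'\cdot k|\geq\|\as'\|^\delta\|k\|$ already forces $\chi_k(\as')=0$ and therefore $\hat Z_k(\as')=0$. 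Either fix makes your argument airtight; everything else (the cutoff preserving the normal-form property, the parameter bookkeeping, the change of base point notation in $\exte$) is handled correctly.
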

\begin{remark}
  \label{unico}
$\forall \as\in\R^d$ $\exists ! M$ s.t. $p\in E^{(s)}_M$. The
  important point is the unicity. 
\end{remark}
We are now ready to prove the following Theorem, giving a control on
the dynamics over long times. This is the typical Nekhoroshev type
theorem adapted to our $C^{\infty}$ context.
\begin{theorem}
  \label{nekho}
There exist positive $K_1<K_2<...<K_d$ s.t. the following
  holds true: consider the Cauchy problem for the Hamiltonian system
  \eqref{forma.normale} with initial datum $\as_0$. Let $M$ with $dim
  M=s$ be s.t. $\as_0\in E^{(s)}_M$. Then one has
  \begin{equation}
    \label{stoqui}
\as(t)\in\exten{\as_0}\ ,\quad \forall |t|\leq
\frac{1}{K_s}\left\|\as_0\right\|^{N+\delta}\ ,
    \end{equation}
  and thus, in particular, for the same times one has
  \begin{equation}
    \label{stopqui}
\left\|\as(t)\right\|\leq 2\left\|\as_0\right\|\ .
    \end{equation}
\end{theorem}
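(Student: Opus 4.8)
The strategy is the standard Nekhoroshev confinement argument, adapted to the global $C^\infty$ setting: show that the trajectory starting in the extended block $E^{(s)}_M$ cannot leave the extended fast-drift plane $[\Pi^{(s)}_M(\as_0)]^{ext}$ until a time of order $\|\as_0\|^{N+\delta}$, by a continuity/bootstrap argument combined with the almost-invariance of $M_\R$ under the flow of $h_0+Z$. I would argue by contradiction on the first exit time. Let $t_*$ be the first time at which $\as(t)$ reaches the boundary of $[\Pi^{(s)}_M(\as_0)]^{ext}$; suppose by contradiction $t_* \le \frac{1}{K_s}\|\as_0\|^{N+\delta}$. By Remark \ref{diam.rm}, on $[0,t_*]$ one has $\|\as(t)-\as_0\|\le C_s\|\as_0\|^{\delta_{s+1}}$, hence (by Remark \ref{rmk.a.or.b}) $\|\as(t)\|\simeq\|\as_0\|$ throughout; in particular the first displayed estimate \eqref{stoqui} will imply \eqref{stopqui} as soon as $\tR$ (hence $\|\as_0\|$) is large enough that $C_s\|\as_0\|^{\delta_{s+1}}\le\|\as_0\|$.

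The core estimate is the control of the motion transverse to $M_\R$. Write $\as(t)=\Pi_M\as(t)+\Pi_{M^\perp}\as(t)$. Since $h_0(\as)=\tfrac12\|\as\|^2$ depends only on $\as$, it contributes nothing to $\dot\as$; the only contributions to $\dot\as=-\partial_\fhi(Z+R^{(N)})$ come from $Z$ and the remainder $R^{(N)}\in\cR^N$. For the normal-form part: as long as $\as(t)\in[\Pi^{(s)}_M(\as_0)]^{ext}$, Corollary \ref{noncisono} gives $Z(\as,\fhi,t)=\sum_{k\in M}Z_k(\as,t)e^{\im k\cdot\fhi}$, so $\partial_\fhi Z$ is a combination of vectors $k\in M$, i.e. $\Pi_{M^\perp}\,\partial_\fhi Z=0$. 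Hence the transverse component of $\dot\as$ is driven only by the remainder: $\|\Pi_{M^\perp}\dot\as(t)\|=\|\Pi_{M^\perp}\partial_\fhi R^{(N)}\|\lesssim\langle\as(t)\rangle^{-N}\lesssim\|\as_0\|^{-N}$. Integrating over $[0,t_*]$ yields $\|\Pi_{M^\perp}(\as(t_*)-\as_0)\|\lesssim t_*\|\as_0\|^{-N}\le \frac{C}{K_s}\|\as_0\|^{\delta}$, which for $K_s$ large is $\le\tfrac12\|\as_0\|^{\delta_{s+1}}$ (recall $\delta\le\delta_{s+1}$). On the other hand, the tangential component $\Pi_M\as(t)$ stays inside $\cZ^{(s)}_M$ by construction of the resonant zone as long as it does not cross a boundary, and when combined with the transverse estimate this forces $\as(t)$ to remain well inside $[\Pi^{(s)}_M(\as_0)]^{ext}$ — the ball of radius $\|\as'\|^{\delta_{s+1}}$ around the plane — contradicting the assumption that $\as(t_*)\in\partial[\Pi^{(s)}_M(\as_0)]^{ext}$. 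One must also rule out the trajectory exiting through $\partial\Pi^{(s)}_M(\as_0)$ within $\cZ^{(s)}_M$; here Remark \ref{esco} and Remark \ref{esteso.1} show that such a boundary point belongs to $\partial E^{(s')}_{M'}$ with $M\subset M'$, $s'<s$, and the transverse bound above (transverse now meaning relative to $M$, whose complement contains a direction of $M'_\R\cap M^\perp$) again shows this cannot happen in the allotted time, provided the $K_j$ are chosen increasing and large enough.

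The one genuinely delicate point is the bookkeeping of the nested exclusions: one has to pick the thresholds $K_1<K_2<\dots<K_d$ (and, behind the scenes, the $\tC_j,\tD_j,\tR$ from the geometric part) so that the confinement statement at "level $s$" is consistent with possibly having to pass to a submodule configuration $M'\supset M$ of lower dimension $s'<s$ — i.e. the argument is really an induction on $s$ (or on codimension), where leaving the block $E^{(s)}_M$ through the part of the boundary lying in $\cZ^{(s)}_M$ lands the orbit in a lower-dimensional extended block, and one needs the time scale there to be at least as long. The exponent $N+\delta$ (rather than just $N$) comes from the fact that the displacement budget one is allowed before declaring an exit is $\|\as_0\|^{\delta_{s+1}}\ge\|\as_0\|^\delta$, while the transverse drift speed is $\|\as_0\|^{-N}$, giving a time $\|\as_0\|^{N+\delta}$. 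Everything else — smoothness, global existence of the flow, $\langle\as(t)\rangle\simeq\langle\as_0\rangle$ on the relevant interval — is routine given Theorem \ref{norm.form} and the remarks of this section.
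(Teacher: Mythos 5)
Your identification of the two exit mechanisms and the transverse estimate is right: on $[\Pi^{(s)}_M(\as_0)]^{ext}$, by Corollary~\ref{noncisono}, $\partial_\fhi Z$ lies in $M_\R$, so $\Pi_{M^\perp}\dot\as$ is driven only by $R^{(N)}\in\cR^N$, giving a transverse drift $\lesssim \|\as_0\|^{-N}\cdot t$, which over the time $\|\as_0\|^{N+\delta}/K_s$ stays well below $\|\as_0\|^{\delta_{s+1}}$. This is exactly the paper's bound for the ``escape in the $M^\perp$ direction'' case.

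But your treatment of the other exit mechanism is incorrect, and this is where the real work lies. Exit through $\partial\Pi^{(s)}_M(\as_0)$ while remaining in $\cZ^{(s)}_M$ is motion \emph{along} $M_\R$, not across it. The transverse ($M^\perp$) bound says nothing about this, so the sentence ``the transverse bound above\ldots again shows this cannot happen'' is simply wrong: the normal form $Z$ does drive fast drift in $M_\R$, this exit can occur, and the paper does not try to forbid it. What actually happens --- and what your final paragraph gestures at but then mislabels as mere bookkeeping --- is the induction: by Remark~\ref{esco}, at the exit time $\bar t$ the orbit lands on $\partial E^{(s')}_{M'}$ with $s'<s$, and one applies the theorem at level $s'$ with new initial datum $\as(\bar t)$. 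The inductive confinement keeps $\as(t)$ inside $[\Pi^{(s')}_{M'}(\as(\bar t))]^{ext}$, whose diameter is $\lesssim\|\as(\bar t)\|^{\delta_{s'+1}}\leq\|\as(\bar t)\|^{\delta_s}\simeq\|\as_0\|^{\delta_s}$, for the remaining time (choosing $K_s$ large relative to $K_{s-1}$ so the time budgets match). Combined with $\|\as_0-\as(\bar t)\|\lesssim\|\as_0\|^{\delta_s}$, the triangle inequality gives $\|\as(t)-\as_0\|\lesssim\|\as_0\|^{\delta_s}$, and \emph{because} $\delta_s<\delta_{s+1}$ this is below the thickness $\|\as_0\|^{\delta_{s+1}}$ of the extended plane once $\|\as_0\|\geq\tR$ is large. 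So the orbit never in fact leaves $[\Pi^{(s)}_M(\as_0)]^{ext}$. This strict-inequality comparison $\delta_s<\delta_{s+1}$ is the mechanism that closes the induction and is what the hierarchy $\delta_s=\delta+\tfrac{s(s-1)}{2}\mu$ was designed for; it is absent from your proposal. The internal inconsistency --- paragraph two claims the tangential exit ``cannot happen'', paragraph three says it does and ``lands the orbit in a lower-dimensional extended block'' --- is the symptom of this missing step.

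Minor: the base case $s=0$ (nonresonant block, $\dot\as=-\partial_\fhi R^{(N)}=\cO(\|\as_0\|^{-N})$) should be stated to ground the induction; and the relation should read $M'\subset M$ (you lose a resonance relation on exit), not $M'\supset M$ --- that inequality paired with $s'<s$ is contradictory.
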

\proof First we remark that this is true when $s=0$ and thus $\as_0$
is in the nonresonant region. Indeed, in this case one has that the
equations for $\as$ reduce to $\dot\as=-\frac{\partial
  R^{(N)}}{\partial\fhi}=\cO(\left\|\as_0^{-N}\right\|)$.

Following Nekhoroshev we proceed by induction on $s$. So, assume
$\as_0\in E^{(s)}_M$ with $s\geq 1$. Assume that the result has been
proved for $s-1$ and we prove it for $s$. Introduce in
$\exten{\as_0}$ coordinates $\as=(\as_M,\as_{\perp})$ with $\as_M\in
M_{\R}$ and $\as_\perp\in
(M_{\R})^\perp$. In these coordinates one has that $\Pi^{(s)}_M(\as)$ is
the set of $\tilde\as=(\tilde\as_M,\tilde\as_{\perp})$ with $\tilde
\as_\perp=\as_\perp$ and $\tilde \as_M$ belonging to some domain $D$ (which
depends also on $\as_\perp$, but this is not important in the
following). Then one has that $\tilde\as\in \partial \Pi^{s}_M(\as)$
is equivalent to $\tilde \as_M\in\partial D$. It follows that $\tilde
\as\in\partial [\Pi^{(s)}_M(\as_0)]^{ext}_{tr}$ implies that either
$\left\|\as_\perp\right\|=\left\|\as_0\right\|^{\delta_{s+1}}$ or
$\tilde \as\in\partial\Pi^{(s)}_M(\as')$ for some
$\as'\in\exten{\as_0}_{tr}$.

We are now ready to conclude the proof: assume that $\exists
0<\bar t\leq \left\|\as_0/K_s\right\|$ s.t. $\as(\bar t)\in\partial
\exten{\as_0}_{tr}$. If it does not exists, then there is nothing to
prove. Then either $\left\|\as_{0,\perp}-\as_{\perp}(\bar
t)\right\|=\left\|\as_0\right\|^{\delta_{s+1}}$ or $\as(\bar
t)\in\partial \Pi^{(s)}_M(\as')$ for some
$\as'\in\exten{\as_0}_{tr}$. Now the first possibility is ruled out by
the fact that in $\exten{\as_0}$ the equations for $\as_\perp$ reduce
to $\dot\as_{\perp}=\cO(\left\|\as_0\right\|^N)$. So, assume that
$\as(\bar t)\in \partial \Pi^{(s)}_M(\as')$, then, by Remark
\ref{esco} one has $\as(\bar t)\in \partial E^{(s')}_{M'}$ with $s'<s$. Then,
by induction, considering $\as(\bar t^+)$ as an initial datum, one gets
that (by Remark \ref{diam.rm})
\begin{equation}
  \label{perdavvero}
\left\|\as_0-\as(\bar t)\right\|\leq C_s \left\|\as(\bar
t)\right\|^{\delta_s}\ ,\quad \left\|\as(t)-\as(\bar t)\right\|\leq
C_s\left\|\as(\bar t)\right\|^{\delta_s}\ , \quad \left|t-\bar
t\right|\leq \frac{\left\|\as(\bar t)\right\|^{N+\delta}}{K_{s-1}}\ .
  \end{equation}
But, actually, by the bound on time \eqref{stoqui}, the times fulfil
$$
\left|t-\bar t\right|\leq |t|+|\bar t|\leq
\frac{2\left\|\as_0\right\|^{N+\delta}}{K_s}\leq C
\frac{2\left\|\as(\bar t)\right\|^{N+\delta}}{K_s}\ ,
$$
and therefore, if $2C/K_s<K_{s-1}$, the estimates \eqref{perdavvero}
hold for the times we are interested in. Concerning the distance from
$\Pi_{M}^{s}(\as_0)$, the above estimates imply 
$$
\left\|\as(t)-\as_0\right\|\leq 2\tilde C_s \left\|\as(\bar
t)\right\|^{\delta_s}\sleq \left\|\as_0\right\|^{\delta_s}\ , 
$$
so the left hand side is smaller than
$\left\|\as_0\right\|^{\delta_{s+1}} $ provided $\as_0$ is large
enough. For $\as_0$ in a compact set the estimate is trivial. \qed

\noindent{{\it Proof of Theorem \ref{main}}}. Assume that there exists
a solution with
$$
\limsup_{t\to+\infty}\left\|\as(t)\right\|=+\infty\ ,
$$
otherwise the result holds trivially. Let $R_k:=R_02^{k}$ with
$R_0:=\left\|\as_0\right\|$, then there exists a sequence of times
$t_k$ s.t.
$$
\sup_{|t|\leq t_k}\left\|\as(t_k)\right\|=2R_k\ . 
$$
Applying Theorem \eqref{nekho} with initial datum $\as(t_k)$, one gets
$$
t_{k+1}>\frac{1}{K_d}R_k^{N}+t_k
$$
(where we redefined $N+\delta\to N$). So, taking $t_0=0$, one gets
\begin{equation}
t_{L+1}\geq \sum_{k=0}^L\frac{1}{K_d}(R_02^k)= \frac{1}{K_d}R_0^N
\frac{2^{(L+1)N}-1}{2^N-1}\geq \frac{1}{2K_d}R_0^N2^{LN}\ .
  \end{equation}
Thus, defining
$$
\tau_0:=0\ ,\quad \tau_{k+1}:=\frac{1}{2K_d}(R_02^k)^N\ ,\quad k\geq
0 \ ,
$$
we have
\begin{equation}
  \label{sup1}
\sup_{|t|\leq\tau_k}\left\|\as(t)\right\|\leq 2R^k\ .
  \end{equation}
To write a global in time formula consider the function
$$
\widetilde \theta(t):=\prod_{k=0}^\infty\left(2\theta(t-\tau_k)\right)\ ,
$$ with $\theta(t)$ the standard Heaviside step function.  Remark that
$\widetilde \theta$ is well defined since, for any time only a finite
number of factors is different from zero. Thus we have the global
estimate
$$
\left\|\as(t)\right\|\leq R_0\widetilde \theta(t)\ .
$$
Consider now the function $f(t):=\frac{R_0\widetilde
  \theta(t)}{t}$. We consider it only for $t\geq1$. Such a
function has positive jumps at $\tau_k$ and in all the other intervals
it is monotonically decreasing like $t^{-1}$. In order to bound such a
function we look for a function interpolating the peaks. One has, for
$k\geq 1$
$$
f(\tau_k^+)=2K_d\frac{R_02^{k+1}}{R_{k-1}^N}=2K_d\frac{4
  R_{k-1}}{R_{k-1}^N}
=\frac{2\left({2K_d}\tau_k\right)^{1/N}} {\tau_k}\ .
$$
So it is clear that an interpolating function is
\begin{equation}
  \label{tildef}
\tilde f(t):=4\left({2K_d}\right)^{1/N}\frac{t^{1/N}}{t}\ ,
\end{equation}
and in this way, $\forall t\geq\tau_1$, one has $f(t)\leq \tilde
f(t)$. It follows
\begin{equation}
  \label{prima.stima}
\left\|\as(t)\right\|\leq
4\left({2K_d}\right)^{1/N}t^{1/N}\ ,\quad t\geq\tau_1\ .
\end{equation}
We now manipulate such an expression to get the thesis. Taking into
account that at $\tau_1$ the r.h.s. of \eqref{prima.stima} is equal to
$4R_0$, one has, $\forall t>0$
\begin{align}
  \label{ultima.stima}
\left\|\as(t)\right\|\leq \max\left\{4R_0,4
\left({2K_d}\right)^{1/N} t^{1/N}\right\}
  =\max \left\{ 4R_0,4
 \left( {2K_d}\right) ^{1/N} \tau_1^{1/N} 
\left( \frac{t}{\tau_1} \right)^{1/N} \right\}
  \\
  =4R_0 \max\left\{1,\left(\frac{t}{\tau_1}\right)^{1/N} \right\}\leq
    4R_0\left\langle \frac{t}{\tau_1}\right\rangle^{1/N}\ .
\end{align}

Still we have to take into account the change of coordinates. In this
way we get that there exists $\bar R_0$, s.t., if the initial datum
fulfills $\left\|\as_0\right\|\geq \bar R_0$, then one has 
$$
\left\|\as(t)\right\|\leq  16R_0\left\langle
\frac{t}{\tau_1}\right\rangle^{1/N}\ .
$$
 \qed

\bibliographystyle{alpha}
\bibliography{biblio}

\end{document}